\let\mathcal\mathscr
\newcommand{\cprime}{\/{\mathsurround=0pt$'$}}
\newcommand*{\pd}[2]{\mathchoice{\frac{\partial#1}{\partial#2}}
  {\partial#1/\partial#2}{\partial#1/\partial#2}
  {\partial#1/\partial#2}}
\newcommand{\envert}[2][\right]{\relax
  \ifx#1\right\relax \left\lvert\else#1\lvert\fi#2#1\rvert}
\let\abs=\envert
\newcommand{\doubell}{\mathcal{L}}
\let\kappa\varkappa
\let\phi\varphi
\newcommand{\hd}{\bar{d}}
\DeclareMathOperator{\CDiff}{\mathcal{C}Dif{}f}
\DeclareMathOperator{\cl}{cl}
\DeclareMathOperator{\Hom}{Hom}
\DeclareMathOperator{\Sym}{Sym}
\DeclareMathOperator{\CoSym}{CoSym}
\DeclareMathOperator{\im}{im}
\DeclareFontFamily{OT1}{wncyi}{}
\DeclareFontShape{OT1}{wncyi}{m}{it}{
   <5> <6> <7> <8> <9> gen * wncyi
   <10> <10.95> <12> <14.4> <17.28> <20.74> <24.88> wncyi10
  }{}
\DeclareSymbolFont{cyrletters}{OT1}{wncyi}{m}{it}
\DeclareSymbolFontAlphabet{\cyrmath}{cyrletters}
\DeclareMathSymbol{\re}{\cyrmath}{cyrletters}{"03}
\newcommand{\Ev}{\re}
\begin{document}
\title{Integrability of Kupershmidt deformations\thanks{This work was
supported in part by NWO-RFBR grant 047017015 (PK, IK, and AV),
RFBR-E.I.N.S.T.E.IN Consortium grant 06-01-92060 (IK, AV, and RV), and
RFBR-CRNS grant 08-07-92496 (IK and AV).}}
\author{P.H.M. Kersten \and I.S. Krasil\cprime shchik \and \\
  A.M. Verbovetsky \and R. Vitolo}
\institute{P.H.M. Kersten \at University of Twente, Postbus 217, 7500 AE
  Enschede, the Netherlands\\
  \email{kerstenphm@ewi.utwente.nl}
  \and
  I.S. Krasil\cprime shchik, A.M. Verbovetsky \at
  Independent University of Moscow, B. Vlasevsky 11, 119002
  Moscow, Russia\\
  \email{josephk@diffiety.ac.ru, verbovet@mccme.ru}
  \and
  R. Vitolo \at Dept.\ of Mathematics ``E. De Giorgi'', Universit\`a del
  Salento, via per Arnesano, 73100 Lecce, Italy\\
  \email{raffaele.vitolo@unile.it} }
\titlerunning{Integrability of Kupershmidt deformations}
\authorrunning{P.H.M. Kersten, I.S. Krasil\cprime shchik,
  A.M. Verbovetsky, R. Vitolo}
\date{\ }

\maketitle

\begin{abstract}
  We prove that the Kupershmidt deformation of a bi-Hamiltonian system is
  itself bi-Hamiltonian.  Moreover, Magri hierarchies of the initial system
  give rise to Magri hierarchies of Kupershmidt deformations as well.  Since
  Kupershmidt deformations are not written in evolution form, we start with an
  outline a geometric framework to study Hamiltonian properties of general
  non-evolution differential equations, developed
  in~\cite{IgoninKerstenKrasilshchikVerbovetskyVitolo:VBGPDE} (see
  also~\cite{KerstenKrasilshchikVerbovetskyVitolo:HSGP}).  \keywords{Nonlinear
    differential equations, variational Schouten bracket, Hamiltonian
    structures, symmetries, conservation laws} \subclass{37K05;35Q53}
\end{abstract}

\section{Introduction}

Recently, in the paper
\cite{Karasu(Kalkani)KarasuSakovichSakovichTurhan:ANIGKdV} the authors derived
the new equation
\begin{equation}\label{eq:1}
 (\partial_x^3 + 8u_x\partial_x + 4u_{xx} )(u_t + u_{xxx} + 6u_x^2) = 0,
\end{equation}
called the KdV6 equation, which turned out to pass the Painlev\'e test.  The
authors introduced the new variables $v=u_x$ and $w=u_t + u_{xxx} + 6u_x^2$,
transforming~\eqref{eq:1} into the system
\begin{equation}
  \label{eq:2}
    v_t + v_{xxx} + 12v v_x - w_x = 0,\quad w_{xxx} + 8v w_x + 4w v_x = 0,
\end{equation}
and posed the problem to study conservation laws and Hamiltonian structures
of the above system.

In \cite{Kupershmidt:KdV6}, Kupershmidt proved the existence of an infinite
series of conservation laws for~\eqref{eq:2} in the following way.  He noted
that~\eqref{eq:2} can be written in the form
\begin{equation}
  \label{eq:3}
      F-A_1(w)=0, \quad A_2(w)=0,
\end{equation}
where $F=0$ is the KdV equation in the unknown $v$, and $A_1$, $A_2$ are the
two standard Hamiltonian operators for $F=0$.  So, \eqref{eq:3} can be
considered to be a deformation of~$F=0$; we call it the \emph{Kupershmidt
deformation}.  The following result yields an infinite series of conservation
laws for~\eqref{eq:3}.

{\def\thetheorem{}
\begin{theorem}[Kupershmidt]
  Let $F=u_t-f=0$ be an evolution bi-Hamiltonian system\textup{,} with
  $A_1$\textup{,}~$A_2$ being the corresponding Hamiltonian operators.  If
  this equation has a Magri hierarchy of conserved densities
  $\frac{dH_i}{dt}=0$\textup{,} $A_1\left(\frac{\delta H_i}{\delta u}\right)
  =A_2\left(\frac{\delta H_{i+1}}{\delta u}\right)$\textup{,} then
  $H_1$\textup{,}~$H_2$\textup{,}~\dots\ are conserved densities
  for~\eqref{eq:3}.
\end{theorem}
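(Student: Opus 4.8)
The plan is to show that the total $t$-derivative of each $H_i$ vanishes modulo the total $x$-derivative, i.e. that $dH_i/dt$ is a total $x$-derivative whenever the fields satisfy the deformed system~\eqref{eq:3}. Let me think about how the Magri relation plus the structure of~\eqref{eq:3} should conspire to produce this.

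Let me write the deformed system as $F = A_1(w)$ together with $A_2(w)=0$, where $F = u_t - f$. So on solutions of the deformation, $u_t = f + A_1(w)$, whereas on solutions of the original KdV equation $u_t = f$. The conserved density $H_i$ of the original system satisfies $\int H_i$ is conserved under the original flow, meaning $\partial_t H_i = D_x(\text{something})$ when $u_t = f$. Under the deformed flow there is an extra term coming from the $A_1(w)$ contribution. I need that extra contribution to vanish modulo $D_x$.

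Let me compute. The time derivative of $H_i$ along the deformed flow: using the chain rule and integration by parts, $\partial_t \int H_i \doteq \int \frac{\delta H_i}{\delta u} u_t = \int \frac{\delta H_i}{\delta u}(f + A_1(w))$. Here $\doteq$ means equality of the integrals, i.e. equality modulo total $x$-derivatives. The first piece $\int \frac{\delta H_i}{\delta u} f$ is exactly the time derivative along the original KdV flow, which is zero since $H_i$ is conserved for the original system. So I am left with the anomalous term $\int \frac{\delta H_i}{\delta u} A_1(w)$.

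Now the Magri relation $A_1(\delta H_i/\delta u) = A_2(\delta H_{i+1}/\delta u)$ enters. Since $A_1$ and $A_2$ are Hamiltonian operators they are skew-adjoint, so I can integrate by parts: $\int \frac{\delta H_i}{\delta u} A_1(w) \doteq -\int A_1\big(\frac{\delta H_i}{\delta u}\big) w = -\int A_2\big(\frac{\delta H_{i+1}}{\delta u}\big) w \doteq \int \frac{\delta H_{i+1}}{\delta u} A_2(w)$, using skew-adjointness of $A_2$ on the last step. But the second equation of the deformation is precisely $A_2(w)=0$. Hence the anomalous term vanishes on solutions, and $\partial_t \int H_i \doteq 0$, which is the claim.

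So the proof is essentially a two-line manipulation: differentiate, drop the original-flow term, integrate by parts twice using skew-adjointness, apply the Magri relation to swap $A_1$ for $A_2$, and kill the result with $A_2(w)=0$. The main obstacle is purely bookkeeping: being careful that all the identities hold \emph{modulo total $x$-derivatives} (the sense in which densities define conserved quantities), and that the integration by parts is legitimate, i.e. that $A_1$, $A_2$ are genuinely skew-adjoint and that the variational derivative $\delta/\delta u$ interacts correctly with the flow. I expect the author's proof to be this short computation, with the only subtlety being the precise meaning of ``conserved density'' for the non-evolutionary system~\eqref{eq:3}, since it is not in evolution form.
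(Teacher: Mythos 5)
Your proposal is correct and is essentially identical to the paper's own proof: the authors compute $\frac{dH_i}{dt}=\left\langle\frac{\delta H_i}{\delta u},f+A_1(w)\right\rangle$, drop the $f$-term, use skew-adjointness of $A_1$, the Magri relation, and skew-adjointness of $A_2$ to arrive at $\left\langle\frac{\delta H_{i+1}}{\delta u},A_2(w)\right\rangle=0$, exactly your chain of manipulations written with the pairing $\langle\cdot\,,\cdot\rangle$ (equality modulo total derivatives) instead of integrals. The only cosmetic difference is that you make explicit the vanishing of the original-flow term and the modulo-$D_x$ bookkeeping, which the paper leaves implicit.
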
}
\addtocounter{theorem}{-1}
\begin{proof}
  \begin{multline*}
    \frac{dH_i}{dt} =\left\langle\frac{\delta H_i}{\delta
        u},f+A_1(w)\right\rangle =\left\langle -A_1\left(\frac{\delta
          H_i}{\delta u}\right),
      w\right\rangle \\
    =\left\langle -A_2\left(\frac{\delta H_{i+1}}{\delta u}\right),
      w\right\rangle =\left\langle\frac{\delta H_{i+1}}{\delta
        u},A_2(w)\right\rangle=0.\quad\qed
  \end{multline*}
\end{proof}

Further, Kupershmidt conjectured that these conservation laws commute (in a
sense) so that \eqref{eq:3} is integrable.

What makes this conjecture especially interesting is that system~\eqref{eq:3}
is not in evolution form.  Recently,
in~\cite{IgoninKerstenKrasilshchikVerbovetskyVitolo:VBGPDE} (see
also~\cite{KerstenKrasilshchikVerbovetskyVitolo:HSGP}) we, together with
S.~Igonin, have introduced a generalization of Hamiltonian formalism for
general, not necessarily evolution, systems.

In the present paper, we apply this formalism to the Kupershmidt conjecture.
Namely we prove that the Kupershmidt deformation of every bi-Hamiltonian
equation is again a bi-Hamiltonian system and that every hierarchy of
conservation laws of the original bi-Hamiltonian system gives rise to a
hierarchy of conservation laws of the Kupershmidt deformation.

\section{Preliminaries}
\label{sec:preliminaries}

An adequate setting for dealing with symmetries and conservation laws of
differential equations is provided by jet bundles. Our main sources are
\cite{KrasilshchikVinogradov:SCLDEqMP,KrasilshchikVerbovetsky:HMEqMP};
we will shortly describe our notation below.

From now on all manifolds and maps are $C^\infty$.

\paragraph{Jets and differential equations.}

Let $\pi\colon E\to M$ be a vector bundle. We denote by $J^k(\pi)$ the
corresponding jet manifold and by $\pi_{k,l}\colon J^k(\pi)\to J^l(\pi)$ for
$k>l$ and $\pi_k\colon J^k(\pi)\to E$ the standard projections. The inverse
limit of the chain of projections
$\cdots\to\pi_{k+1,k}\to\pi_{k,k-1}\to\cdots$ is said to be the
\emph{infinite order jet space} and is denoted by $J^\infty(\pi)$.

Let $x^1,\dots,x^n$ be local coordinates in~$M$, $u^1,\dots,u^{m}$ and be local
fiber coordinates in $E$. Then $u^j_\sigma$, where $\sigma$ is a multiindex of
arbitrary length, denote local derivative coordinates on the fibers
of~$J^\infty(\pi)$. If $s\colon M\to E$ is a section then its prolongation
$j_\infty s\colon M\to J^\infty(\pi)$ fulfills $u^j_\sigma\circ j_\infty s =
\pd{^{|\sigma|} (u^j\circ s)}{x^\sigma}$.

We denote by $\mathcal{F}(\pi)$ the algebra of smooth functions on
$J^\infty(\pi)$. This is defined as the direct limit of the chain of inclusions
of smooth functions on $J^k(\pi)$ into smooth functions on $J^{k+1}(\pi)$ via
pull-back. The $\mathcal{F}(\pi)$-module $\Lambda^*(\pi)$ of differential forms
on $J^\infty(\pi)$ is defined in the similar way.

A \emph{horizontal module} is the $\mathcal{F}(\pi)$-module of sections of
$\pi^*_\infty(\alpha)$, where $\alpha$ is a vector bundle over $M$. Denote by
$\kappa(\pi)$ the horizontal module corresponding to the bundle $\pi$
itself.

Let $P_1(\pi)$ and $P_2(\pi)$ be horizontal modules. A linear differential
operator $\Delta\colon P_1(\pi)\to P_2(\pi)$ is called
\emph{$\mathcal{C}$-differential} if it can be restricted to the graphs of all
infinitely prolonged sections of the bundle.  The set of all
$\mathcal{C}$-differential operators from~$P_1(\pi)$ to~$P_2(\pi)$ is denoted
by $\CDiff(P_1(\pi),P_2(\pi))$.  In coordinates, $\mathcal{C}$-differential
operators have the form of a matrix $(a^{\sigma}_{ij}D_{\sigma})$, where
$a^{\sigma}_{ij}\in\mathcal{F}(\pi)$, $D_{\sigma}=D_{i_1}\circ\dots\circ
D_{i_r}$ for $\sigma=i_1\dotsc i_r$ and
$D_i=\pd{}{x^i}+u^j_{\sigma i}\pd{}{u^j_{\sigma}}$
is the total derivative operator
with respect to $x^i$\footnote{The Einstein summation convention will be used
throughout the paper.}.

A $\pi_\infty$-vertical vector field on $J^\infty(\pi)$ is called an
\emph{evolutionary field} if it commutes with all $D_i$ (this property does not
depend on the choice of coordinates).  In coordinates, each evolutionary field
is of the form $\Ev_\phi=D_\sigma(\phi^j)\pd{}{u_\sigma^j}$, where
$\phi^j\in\mathcal{F}(\pi)$. It can be proved that each evolutionary vector
field is uniquely determined by its \emph{generating function}
$\phi\in\kappa(\pi)$ and vice verse to any $\phi\in\kappa(\pi)$
there corresponds an evolutionary field~$\Ev_{\phi}$.

Let $P(\pi)$ be a horizontal module.  For each element $p\in P(\pi)$ there is a
$\mathcal{C}$-differential operator $\ell_p\colon\kappa(\pi)\to P(\pi)$
called the \emph{universal linearization} of~$p$ and defined by
$\ell_p(\phi)=\Ev_\phi(p)$, with $\phi\in\kappa(\pi)$. In
coordinates, $\ell_p$ is the matrix of the form
$(\pd{p^i}{u^j_\sigma}\cdot D_\sigma)$.

A differential form $\omega\in\Lambda^q(\pi)$ on $J^\infty(\pi)$ is called a
\emph{Cartan form} if its pull-back through any prolonged section vanishes.  In
coordinates, Cartan forms contain factors of the type
$\omega_\sigma^j=du_\sigma^j-u_{\sigma i}^j\,dx^i$.  Denote the module of all
Cartan $q$-forms by $\mathcal{C}\Lambda^q(\pi)$.  It is not difficult to show
that $d(\mathcal{C}\Lambda^q(\pi))\subset\mathcal{C}\Lambda^{q+1}(\pi)$.
Therefore
the quotient $\bar{d}$ of $d$, acting on
$\bar{\Lambda}^*(\pi)=\Lambda^*(\pi)/\mathcal{C}\Lambda^*(\pi)$, is well
defined.  The module $\bar{\Lambda}^*(\pi)$ is identified with the submodule
in $\Lambda^*(\pi)$ generated by $\pi^*_{\infty}\Lambda^*(M)$.
Elements of $\bar{\Lambda}^q(\pi)$ are called \emph{horizontal
forms}.  In coordinates, $\bar{\Lambda}^q(\pi)$ is generated by
$f\,dx^{i_1}\wedge\dots\wedge dx^{i_q}$, where $f\in\mathcal{F}(\pi)$, and
$\bar{d}(f\,dx^{i_1}\wedge\dots\wedge dx^{i_q}) =D_i(f)\,dx^i\wedge
dx^{i_1}\wedge\dots\wedge dx^{i_q}$.  The cohomology $\bar{H}^i(\pi)$ of the
complex $(\bar{\Lambda}^*(\pi),\bar{d})$ is said to be the \emph{horizontal
  cohomology}, and coincides with the de~Rham cohomology of $M$ for all degrees
$i$ up to $n-1$.

If $P(\pi)$ is an $\mathcal{F}(\pi)$-module, we write $\hat
P(\pi)=\Hom_{\mathcal{F}(\pi)}(P(\pi),\bar{\Lambda}^n(\pi))$ and consider the
natural pairing $\langle\cdot\,,\cdot\rangle\colon \hat P(\pi)\times
P(\pi)\to\bar{\Lambda}^n(\pi)$.  We recall that for each operator
$\Delta\in\CDiff(P_1(\pi),P_2(\pi))$ there exists a unique operator
$\Delta^*\in\CDiff(\hat P_2(\pi),\hat P_1(\pi))$ such that
\begin{equation*}
  [\langle\hat p_2,\Delta(p_1)\rangle]
  =[\langle\Delta^*(\hat p_2),p_1\rangle],
  \qquad\hat p_2\in\hat P_2(\pi),\quad p_1\in P_1(\pi),
\end{equation*}
where $[\omega]$ denotes the horizontal cohomology class of
$\omega\in\bar{\Lambda}^n$.  The operator~$\Delta^*$ is called
\emph{adjoint} to~$\Delta$.  In coordinates, $(a_{ij}^\sigma D_\sigma)^*
=((-1)^{\abs{\sigma}}D_\sigma\circ
a_{ji}^\sigma)$, where $a_{ij}^\sigma\in\mathcal{F}(\pi)$.

A \emph{differential equation} is a submanifold $\mathcal{E}\subset
J^k(\pi)$. We assume that $\mathcal{E}=\{F=0\}$, where $F\in\pi_k^*(\alpha)$ is
a section of the pull-back bundle of a vector bundle $\alpha$ on $M$.

Let us set $P(\pi)=\pi_\infty^*(\alpha)$.  The equation $\mathcal{E}$ can be
prolonged to a submanifold of $J^\infty(\pi)$, that we still denote by
$\mathcal{E}$. In local coordinates, we have $\mathcal{E}=\{D_\sigma F^k=0\}$
for such a prolongation.
Throughout the paper we will assume equations to be \emph{regular}: this means
that any function $f\in\mathcal{F}(\pi)$ vanishing on $\mathcal{E}$ can be
expressed as $f=\Delta(F)$, where $\Delta\in\CDiff(P(\pi),\mathcal{F}(\pi))$.

\paragraph{Symmetries.}

For any horizontal module $Q(\pi)$ we will denote by $Q$ the restriction of
$Q(\pi)$ to $\mathcal{E}$.  We denote by $\ell_{\mathcal{E}}\colon \kappa \to
P$ the restriction of the universal linearization~$\ell_F$
to $\mathcal{E}$. The space
$\ker \ell_{\mathcal{E}}$ coincides with the space $\Sym\mathcal{E}$ of
\emph{symmetries} of the differential equation $\mathcal{E}$.  The space $\ker
\ell_{\mathcal{E}}^*$ is the space of \emph{cosymmetries}, and is denoted by
$\CoSym\mathcal{E}$.

It can be proved that there exists a sequence
\begin{equation}
  \label{eq:35}
  0 \to \CDiff(P,\mathcal{F}) \to \CDiff(\kappa,\mathcal{F}) \to
  \mathcal{C}\Lambda^1 \to 0.
\end{equation}
Here the first nontrivial map is defined by $\Delta \mapsto
\Delta\circ\ell_{\mathcal{E}}$ and the second one is a natural projection
(see \cite{KrasilshchikVinogradov:SCLDEqMP,KrasilshchikVerbovetsky:HMEqMP} for
details). The differential equation $\mathcal{E}$ is said to be
\emph{normal} if the above sequence is exact. In other words, $\mathcal{E}$ is
normal, if the equality
$\Delta\circ\ell_{\mathcal{E}}$ implies $\Delta=0$.  Most differential
equations of mathematical physics fulfill this property; however, the gauge
equations do not.

\paragraph{Conservation laws.}

Consider the horizontal de Rham complex $(\bar\Lambda^q,\bar d)$ on
$\mathcal{E}$. A \emph{conservation law} is an element $[\omega]\in\bar
H^{n-1}/H^{n-1}(M)
$ (we quotient out the topological conservation laws).
The Vinogradov $\mathcal{C}$-spectral
sequence, see
\cite{KrasilshchikVinogradov:SCLDEqMP,KrasilshchikVerbovetsky:HMEqMP,Vinogradov:CAnPDEqSC}
for a detailed treatment) yields the complex $(E_1^{*,n-1},d_1)$ which plays
the same role as the de Rham complex on jets. In particular the first two terms
of the complex are $E_1^{0,n-1}=\bar H^{n-1}$ and
$E_1^{1,n-1}=\ker\ell_{\mathcal{E}}^*=\CoSym\mathcal{E}$,
and the first differential $d_1\colon
E_1^{0,n-1} \to E_1^{1,n-1}$ is defined by $d_1([\omega]) = \Delta^*(1)$, where
$\Delta\in\CDiff(P,\bar\Lambda^n)$ fulfills $\bar d \omega=\Delta(F)$. The
element $\Delta^*(1)\in\hat P$ is said to be a \emph{generating function} of the
conservation law $[\omega]$.

A $\mathcal{C}$-spectral sequence argument shows that there exists an exact
sequence $0\to H^{n-1}\to \bar H^{n-1} \to \ker\ell_{\mathcal{E}}^*$, where the
last map is just $d_1$. So, the space of conservation laws
$\cl(\mathcal{E})=\bar H^{n-1}/H^{n-1}(M)$ is a subset of the kernel of~
$\ell^*_{\mathcal{E}}$,
$\cl(\mathcal{E})\subset \ker\ell_{\mathcal{E}}^*=\CoSym\mathcal{E}$.

\section{Hamiltonian bivectors on general equations}
\label{sec:hamilt-bivect-gener}

In this section we collect formulas related to the Hamiltonian formalism on
general equations. For details and geometric definitions we refer the reader
to~\cite{IgoninKerstenKrasilshchikVerbovetskyVitolo:VBGPDE} (see
also~\cite{KerstenKrasilshchikVerbovetskyVitolo:HSGP}).

Let $\mathcal{E}\subset J^\infty(\pi)$ be a normal equation given by $F=0$.

A \emph{variational bivector} on~$\mathcal{E}$ is the equivalence class of
$\mathcal{C}$-differential operators $A\colon\hat{P}\to\kappa$
on~$\mathcal{E}$ that satisfy the condition
\begin{equation}\label{eq:4}
  \ell_{\mathcal{E}}\circ A=A^*\circ\ell^*_{\mathcal{E}},
\end{equation}
where two operators are equivalent if they
differ by an operator of the form $\square\circ\ell_{\mathcal{E}}^*$,
$\square=\square^*\colon\hat{\kappa}\to\kappa$.

It is straightforward to see that an action of variational bivectors on
cosymmetries (and, in particular, on conservation laws) is well-defined
and the result is a symmetry.

If $A$ is a bivector then on $J^\infty(\pi)$ we have
\begin{equation}\label{eq:5}
  \ell_F\circ A-A^*\circ\ell_F^*=B(F,\,\cdot\,),
\end{equation}
where $A$ is extended onto~$J^\infty(\pi)$,
$B\colon P(\pi)\times\hat{P}(\pi)\to
P(\pi)$ is a $\mathcal{C}$-differential operator. We denote by
$B^*\colon\hat{P}\times\hat{P}\to\hat{P}$ the operator adjoint to the
operator~$B$ in the first argument and restricted to~$\mathcal{E}$.

Consider an equivalence relation on the set of operators
$\CDiff(\hat{P}\times\hat{P},\hat{P})$ on~$\mathcal{E}$ such that two
operators are equivalent if they differ by an operator of the form
\begin{equation}\label{eq:6}
  \square_1(\ell_{\mathcal{E}}^*(\,\cdot\,),\,\cdot\,)
  +\square_2(\,\cdot\,,\ell_{\mathcal{E}}^*(\,\cdot\,)),
\end{equation}
where $\square_1\colon\hat{\kappa}\times\hat{P}\to\hat{P}$,
$\square_2\colon\hat{P}\times\hat{\kappa}\to\hat{P}$

\begin{proposition}
  For every bivector~$A$ the equivalence class of~$B^*$ is uniquely defined
  and contains a skew-symmetric operator
  $B^*\colon\hat{P}\times\hat{P}\to\hat{P}$.
\end{proposition}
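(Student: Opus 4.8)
The plan is to fix, once and for all, an extension of the chosen representative $A$ to $J^\infty(\pi)$ and a $\mathcal{C}$-differential operator $B$ satisfying \eqref{eq:5}, to show that the resulting $B^*$ is skew-symmetric on $\mathcal{E}$, and then to show that any other admissible choice changes $B^*$ only by an operator of the form \eqref{eq:6}. These two facts together give both the existence of a skew-symmetric representative and the well-definedness of its class.

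For the skew-symmetry I would start from the observation that the left-hand side of \eqref{eq:5} is skew-adjoint. Indeed $(\ell_F\circ A)^*=A^*\circ\ell_F^*$, so $\ell_F\circ A-A^*\circ\ell_F^*$ equals minus its own adjoint, and hence $B(F,\,\cdot\,)$ is skew-adjoint as an operator $\hat{P}(\pi)\to P(\pi)$. Rewriting this through the defining relation of the first-argument adjoint, namely $[\langle\hat r,B(F,\hat q)\rangle]=[\langle B^*(\hat r,\hat q),F\rangle]$, skew-adjointness becomes $[\langle B^*(\hat r,\hat q)+B^*(\hat q,\hat r),F\rangle]=0$ on $J^\infty(\pi)$ for all $\hat r,\hat q$. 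In other words, the symmetric part of $B^*$ pairs trivially, in horizontal cohomology, with $F$.

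The heart of the matter is then to upgrade this cohomological triviality to a genuine statement on $\mathcal{E}$, and this is the step I expect to be the main obstacle. Fixing $\hat q$ and using Green's formula to move the first argument off the symmetric part $C(\hat r,\hat q)=B^*(\hat r,\hat q)+B^*(\hat q,\hat r)$, the vanishing of $[\langle\,\cdot\,,F\rangle]$ for all $\hat r$, together with nondegeneracy of the pairing, yields $\nabla_{\hat q}(F)=0$ on $J^\infty(\pi)$, where $\nabla_{\hat q}$ is the first-argument adjoint $(C(\,\cdot\,,\hat q))^*$. Applying an arbitrary evolutionary field $\Ev_\phi$ to this identity and using $\Ev_\phi(D_\sigma F)=D_\sigma(\ell_F(\phi))$, every term that carries a factor $D_\sigma F$ dies on $\mathcal{E}$, leaving $\nabla_{\hat q}|_{\mathcal{E}}\circ\ell_{\mathcal{E}}=0$; normality of $\mathcal{E}$, i.e.\ exactness of \eqref{eq:35}, then forces $\nabla_{\hat q}|_{\mathcal{E}}=0$, and adjoining back gives $C|_{\mathcal{E}}=0$. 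Thus the symmetric part of $B^*$ vanishes on $\mathcal{E}$, so $B^*$ is skew-symmetric (in particular trivial modulo \eqref{eq:6}). I flag this passage because it is exactly where the hypotheses on $\mathcal{E}$ are used and where one must be careful to turn ``trivial in cohomology'' into ``zero on the equation''.

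Finally, for well-definedness I would track the two sources of ambiguity. Replacing $B$ within the freedom allowed by \eqref{eq:5} amounts to adding a $B'$ with $B'(F,\,\cdot\,)=0$; the same normality argument as above gives $B'|_{\mathcal{E}}=0$, so $B^*$ is unaltered on $\mathcal{E}$. Replacing the extension of $A$ by $A+\delta A$ with $\delta A|_{\mathcal{E}}=0$ means, by regularity, $\delta A=R(F,\,\cdot\,)$ for a bilinear $\mathcal{C}$-differential operator $R$; substituting into \eqref{eq:5}, taking the first-argument adjoint, and integrating by parts produces precisely the summand $R(\,\cdot\,,\hat q)^*(\ell_{\mathcal{E}}^*(\hat r))$ together with a term factoring through $\ell_{\mathcal{E}}^*(\hat q)$, that is, an operator of the form \eqref{eq:6}. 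The remaining freedom $A\mapsto A+\square\circ\ell_{\mathcal{E}}^*$ with $\square=\square^*$, extended self-adjointly, contributes $\ell_F\square\ell_F^*-\ell_F\square^*\ell_F^*=0$ to the left-hand side of \eqref{eq:5} and is absorbed into the previous case. Hence the class of $B^*$ does not depend on any of the choices and contains the skew-symmetric representative produced above.
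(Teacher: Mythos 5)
The paper itself gives no proof of this Proposition: Section~\ref{sec:hamilt-bivect-gener} only collects statements whose proofs are deferred to~\cite{IgoninKerstenKrasilshchikVerbovetskyVitolo:VBGPDE}, so there is no in-paper argument to compare yours against; the closest material is the Remark treating the evolution case via Eq.~\eqref{eq:7} and the proof of Lemma~\ref{sec:kupershm-deform-2}, whose technique your argument closely parallels. Judged on its own, your proof is correct. The skew-adjointness of $\ell_F\circ A-A^*\circ\ell_F^*$ is immediate from $(\ell_F\circ A)^*=A^*\circ\ell_F^*$; Green's formula converts it into $[\langle B^*(\hat r,\hat q)+B^*(\hat q,\hat r),F\rangle]=0$; and your two-stage upgrade --- nondegeneracy of the pairing to obtain $\nabla_{\hat q}(F)=0$ on all of $J^\infty(\pi)$, then differentiation along an arbitrary evolutionary field plus normality (exactness of~\eqref{eq:35}) to obtain $\nabla_{\hat q}|_{\mathcal{E}}=0$, whence $C|_{\mathcal{E}}=0$ by taking adjoints --- is the right mechanism and correctly locates where normality is indispensable. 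Your bookkeeping of the three ambiguities is also correct: a change of $B$ with $B'(F,\cdot)=0$ dies on $\mathcal{E}$ by the same normality argument; a change of extension $\delta A=R(F,\cdot)$ contributes exactly the two terms of the form~\eqref{eq:6}; and a change of representative $A\mapsto A+\square\circ\ell^*_{\mathcal{E}}$ with $\square$ extended self-adjointly contributes nothing new. In fact you prove slightly more than the statement asks: every $B^*$ produced by the construction is itself skew-symmetric on $\mathcal{E}$, not merely equivalent to a skew-symmetric operator.

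Two steps should be made explicit in a final write-up, though neither is a genuine gap. First, ``nondegeneracy of the pairing'' is safest invoked in its uniform form: what you actually have is that the zero-order operator $\hat r\mapsto\langle\hat r,\nabla_{\hat q}(F)\rangle$ equals $\bar{d}$ composed with a $\mathcal{C}$-differential operator in $\hat r$ (every discrepancy in your chain of identities arises from Green's formula, hence is itself an operator in $\hat r$); this is precisely the principle the paper uses in proving Lemma~\ref{sec:kupershm-deform-2} (``the only linear operator which is the composition of $\bar{d}$ and an operator in $\psi_2$ is the zero operator''), and it yields $\nabla_{\hat q}(F)=0$ exactly. Second, the implication ``$\delta A|_{\mathcal{E}}=0$ implies $\delta A=R(F,\cdot)$ with $R$ bidifferential'' is the operator-valued version of the paper's regularity assumption, which is stated there only for functions; it follows by applying regularity to the coefficients of $\delta A$, but it deserves a sentence.
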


\begin{remark}
  If $\mathcal{E}$ is written in evolution form (in this case $P=\kappa$) then
  we can put $B^*(\psi_1,\psi_2)=\ell^*_{A,\psi_2}(\psi_1)$, here we use the
  notation $\ell_{\Delta,p}=\ell_{\Delta(p)}-\Delta\circ\ell_p$.
  Skew-symmetricity follows from the formula
  \begin{equation}\label{eq:7}
    \ell^*_{\Delta,p}(\hat{q})=\ell^*_{\Delta^*,\hat{q}}(p).
  \end{equation}
\end{remark}

The Schouten bracket of two bivectors is defined by the formula

\begin{equation}\label{eq:8}
  \begin{aligned}
    &[\![A_1,A_2]\!](\psi_1,\psi_2) \\
    &\qquad=\ell_{A_1,\psi_1}(A_2(\psi_2))
    -\ell_{A_1,\psi_2}(A_2(\psi_1)) \\
    &\qquad+\ell_{A_2,\psi_1}(A_1(\psi_2))
    -\ell_{A_2,\psi_2}(A_1(\psi_1)) \\
    &\qquad\qquad-A_1(B_2^*(\psi_1,\psi_2))-A_2(B_1^*(\psi_1,\psi_2)),
  \end{aligned}
\end{equation}
where $\psi_1,\psi_2\in\hat{P}$ and $B_1$,
$B_2$ are the operators which play the role of $B$ in Eq.~(\ref{eq:5}) for
$A_1$, $A_2$, respectively.

The Schouten bracket is a bracket between multivectors.  This means that it is
defined on a wide class of operators. In particular, $0$-vectors are
conservation laws and the bracket between a variational bivector $A$
and a conservation law $\omega$ has the form
\begin{equation*}
  [\![A,\omega]\!]=A(\psi),
\end{equation*}
where $\psi=d_1[\omega]$ is the generating function of $\omega$;
$1$-vectors are symmetries, the bracket between a
symmetry $\phi$ and a conservation law $[\omega]$:
\begin{equation*}
  [\![\phi,\omega]\!]=L_{\phi}(\omega)=\Ev_{\phi}(\psi)+\Delta^*(\psi),
\end{equation*}
where $L_{\phi}$ is the Lie derivative and the operator $\Delta\colon P\to P$ is
defined by the relation $\ell_F(\phi)=\Delta(F)$ on~$J^\infty$
(for equations in evolution form one can take $\Delta=\ell_{\phi}$);
the bracket between two symmetries
\begin{equation*}
  [\![\phi_1,\phi_2]\!]=[\phi_1,\phi_2]\quad\text{(the usual commutator)}.
\end{equation*}
and the bracket between two conservation laws
\begin{equation*}
  [\![\omega_1,\omega_2]\!]=0.
\end{equation*}

A bivector~$A$ is called \emph{Hamiltonian} if $[\![A,A]\!]=0$.

A Hamiltonian bivector~$A$ on~$\mathcal{E}$ gives rise to a Lie algebra
structure on the space of conservation laws of~$\mathcal{E}$:
\begin{equation*}
  \{\omega_1,\omega_2\}_A=[\![[\![A,\omega_1]\!],\omega_2]\!].
\end{equation*}

An equation~$\mathcal{E}$ is called \emph{bi-Hamiltonian} if it has two
Hamiltonian bivectors~$A_1$ and~$A_2$ such that $[\![A_1,A_2]\!]=0$.

\emph{Magri hierarchy} on a bi-Hamiltonian equation~$\mathcal{E}$ is an
infinite sequence $\omega_1$,~$\omega_2$,~\dots{} of conservation laws
of~$\mathcal{E}$ such that $A_1(\psi_i)=A_2(\psi_{i+1})$.

\begin{proposition}
  For Magri hierarchy we get
  \begin{gather*}
    \{\omega_i,\omega_j\}_{A_1}=0, \quad
    \{\omega_i,\omega_j\}_{A_2}=0
    \\
    \{\phi_i,\phi_j\}=0,
  \end{gather*}
  where $\phi_i=A_1(\psi_i)=A_2(\psi_{i+1})$ are symmetries and the bracket
  between them is the commutator:
  $\Ev_{\{\phi_i,\phi_j\}}=[\Ev_{\phi_i},\Ev_{\phi_2}]$.
\end{proposition}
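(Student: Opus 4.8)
The plan is to derive all three identities from the structural properties of the Schouten bracket recorded above: its graded antisymmetry and graded Jacobi identity, the bi-Hamiltonian relations $[\![A_1,A_1]\!]=[\![A_2,A_2]\!]=[\![A_1,A_2]\!]=0$, and the vanishing $[\![\omega_i,\omega_j]\!]=0$ of the bracket of two conservation laws. I keep the notation $\psi_i=d_1[\omega_i]$, so that $\{\omega_i,\omega_j\}_{A_a}=[\![[\![A_a,\omega_i]\!],\omega_j]\!]=[\![A_a(\psi_i),\omega_j]\!]$ and $\phi_i=A_1(\psi_i)=A_2(\psi_{i+1})$. The first step is to record the skew-symmetry $\{\omega_i,\omega_j\}_{A_a}=-\{\omega_j,\omega_i\}_{A_a}$. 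This comes from expanding $0=[\![A_a,[\![\omega_i,\omega_j]\!]]\!]$ by the graded Jacobi identity: the summand containing $[\![\omega_i,\omega_j]\!]$ drops out, and the two remaining summands are, up to signs, $[\![A_a(\psi_i),\omega_j]\!]$ and $[\![A_a(\psi_j),\omega_i]\!]$, whose sum must vanish; tracking the signs turns this into skew-symmetry.

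The second step is the Magri shift. Since $[\![\,\cdot\,,\omega_j]\!]$ depends only on the inserted symmetry, the relation $A_1(\psi_i)=A_2(\psi_{i+1})$ gives immediately
\[
  \{\omega_i,\omega_j\}_{A_1}=[\![A_1(\psi_i),\omega_j]\!]=[\![A_2(\psi_{i+1}),\omega_j]\!]=\{\omega_{i+1},\omega_j\}_{A_2}.
\]
Applying skew-symmetry, then this shift, then skew-symmetry again yields the companion relation $\{\omega_i,\omega_j\}_{A_1}=\{\omega_i,\omega_{j+1}\}_{A_2}$. Comparing the two shows that $\{\omega_m,\omega_n\}_{A_2}$ is invariant under $(m,n)\mapsto(m-1,n+1)$, hence depends only on $m+n$; a quantity depending only on $m+n$ that is also skew-symmetric in $(m,n)$ must be identically zero (bring it to $\{\omega_1,\omega_{m+n-1}\}_{A_2}$ and to $\{\omega_{m+n-1},\omega_1\}_{A_2}$, which are equal by invariance and opposite by skew-symmetry). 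Thus $\{\omega_m,\omega_n\}_{A_2}=0$, and the shift relation forces $\{\omega_i,\omega_j\}_{A_1}=0$ as well.

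For the commuting symmetries I would establish the general homomorphism property of a Hamiltonian map: whenever $[\![A,A]\!]=0$, one has $[\![[\![A,\omega_1]\!],[\![A,\omega_2]\!]]\!]=[\![A,\{\omega_1,\omega_2\}_A]\!]$. This follows from two applications of the graded Jacobi identity together with the identity $[\![A,[\![A,\omega]\!]]\!]=\tfrac12[\![[\![A,A]\!],\omega]\!]=0$, which kills the unwanted terms. Taking $A=A_1$ and recalling $\Ev_{[\![\phi_i,\phi_j]\!]}=[\Ev_{\phi_i},\Ev_{\phi_j}]$ gives
\[
  [\Ev_{\phi_i},\Ev_{\phi_j}]=\Ev_{A_1(d_1\{\omega_i,\omega_j\}_{A_1})},
\]
which vanishes because $\{\omega_i,\omega_j\}_{A_1}=0$ by the previous step.

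I expect the main obstacle to be the consistent sign bookkeeping in the graded Jacobi identity for multivectors of mixed degree (bivectors, symmetries, and conservation laws carry different degrees). The structural steps are short, but each depends on the signs coming out exactly right, so one must fix a grading convention and verify it uniformly. In the non-evolutionary setting this is the only genuinely delicate point: the homomorphism identity of the last step is standard over evolution equations, but here it must be justified from the precise definition~\eqref{eq:8} of the Schouten bracket rather than treated as purely formal.
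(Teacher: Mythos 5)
Your proof is correct, but there is no in-paper proof to compare it against: this proposition sits in the section that merely \emph{collects} the formalism, with all proofs deferred to the cited work \cite{IgoninKerstenKrasilshchikVerbovetskyVitolo:VBGPDE}. Your route is the classical Lenard--Magri argument carried out abstractly in the graded Lie algebra of multivectors: skew-symmetry of $\{\cdot\,,\cdot\}_{A_a}$ from the graded Jacobi identity together with $[\![\omega_i,\omega_j]\!]=0$; the shift $\{\omega_i,\omega_j\}_{A_1}=[\![A_1(\psi_i),\omega_j]\!]=[\![A_2(\psi_{i+1}),\omega_j]\!]=\{\omega_{i+1},\omega_j\}_{A_2}$, which combined with skew-symmetry shows the $A_2$-bracket depends only on $i+j$ and hence vanishes (your telescoping down to index $1$ is legitimate precisely because the hierarchy is bounded below at $i=1$, and that hypothesis is genuinely needed); and the homomorphism property $[\![[\![A,\omega_i]\!],[\![A,\omega_j]\!]]\!]=[\![A,\{\omega_i,\omega_j\}_A]\!]$ when $[\![A,A]\!]=0$, which converts involutivity into commutativity of the flows $\phi_i$. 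I checked the sign bookkeeping in the convention where a $k$-vector has degree $k-1$: the auxiliary identity $[\![A,[\![A,\omega]\!]]\!]=\tfrac12[\![[\![A,A]\!],\omega]\!]$ and the skew-symmetry both come out exactly as you state. The single genuine dependency, which you correctly flag yourself, is the graded antisymmetry and Jacobi identity of the variational Schouten bracket on a general normal (non-evolution) equation: the paper never states these properties --- it defines the bracket only on pairs of bivectors, Eq.~\eqref{eq:8}, plus ad hoc formulas for the lower-degree pairings --- so your argument is complete only modulo the structure theorems of \cite{IgoninKerstenKrasilshchikVerbovetskyVitolo:VBGPDE}, which is, however, exactly the same ground the paper itself stands on. A point worth making explicit, which your proof reveals and the bare statement hides: the involutivity claims $\{\omega_i,\omega_j\}_{A_1}=\{\omega_i,\omega_j\}_{A_2}=0$ use neither $[\![A_a,A_a]\!]=0$ nor the compatibility $[\![A_1,A_2]\!]=0$, only the Magri recursion and skew-symmetry; the Hamiltonian property of $A_1$ enters solely in the final step $\{\phi_i,\phi_j\}=0$.
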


\begin{remark}
  The reader is invited to check that, if $\mathcal{E}$ has an evolutionary
  form, then the above Hamiltonian formalism reduces to the usual one.
\end{remark}

\section{The Kupershmidt deformation}
\label{sec:kupershm-deform}

Let $\mathcal{E}$ be a bi-Hamiltonian equation with Hamiltonian
operators~$A_1$ and~$A_2$ given by $F=0$ as above.  Let us consider the bundle
$\hat{\pi}\colon\hat{E}=E^*\otimes_M\Lambda^n(T^*M)\to M$,
where $E^*\to M$ is the dual bundle to $\pi\colon E\to M$.  We denote by
$w=(w^1,\dots,w^m)$ fiber coordinates on $\hat{\pi}$.

\begin{definition}
  The \emph{Kupershmidt deformation}~$\tilde{\mathcal{E}}\subset
  J^\infty(\pi)\times J^\infty(\hat{\pi})$ has the form
\begin{equation}
  \label{eq:9}
  F+A_1^*(w)=0,\qquad A_2^*(w)=0.
\end{equation}
\end{definition}
We will write $\tilde{F}=(F+A_1^*(w), A_2^*(w))\in \tilde{P} = P\oplus P$,
so that $\tilde{\mathcal{E}}$ is given by $\tilde{F}=0$.

The linearization $\ell_{\tilde{\mathcal{E}}} \colon
\kappa\oplus\hat{\kappa} \to P\oplus P$
and its adjoint have the form
\begin{equation}\label{eq:10}
  \ell_{\tilde{\mathcal{E}}}=
  \begin{pmatrix}
    \ell_{F+A^*_1(w)} & A_1^* \\
    \ell_{A_2^*(w)} & A_2^*
  \end{pmatrix},
  \quad
  \ell^*_{\tilde{\mathcal{E}}}=
  \begin{pmatrix}
    \ell^*_{F+A^*_1(w)} & \ell^*_{A_2^*(w)} \\
    A_1 & A_2
  \end{pmatrix}
\end{equation}
The linearizations in the left- and right-hand sides of these formulas
have different meaning: the left-hand ones are usual
linearizations on~$\tilde{\mathcal{E}}$,
while the right-hand ones are linearizations with respect to the dependent
variables $u$ only, that is, linearizations on~$J^{\infty}(\pi)$.

In what follow we use the following notational rule: linearization of
something marked with tilde is the
$J^\infty(\pi)\times J^\infty(\hat{\pi})$-linearization
(i.e., the linearization  with respect to the dependent variables $u$,~$w$),
otherwise it is the $J^{\infty}(\pi)$-linearization.

Below we often use the obvious relation $\ell_{A^*,w}=\ell_{A^*(w)}$,
which is true since the linearization here is the
$J^{\infty}(\pi)$-linearization.

\begin{lemma}
  \label{sec:kupershm-deform-2}
  We have the following equalities:
  \begin{subequations}\label{eq:11}
   \begin{align}
     \label{eq:12}
    &\ell_{F+A_1^*(w)}\circ A_1 - A_1^*\circ\ell_{F+A_1^*(w)}^* =
    B_1(F+A_1^*(w),\cdot)\\
    \label{eq:13}
    &\ell_{F+A_1^*(w)}\circ A_2 - A_1^*\circ\ell^*_{A_2^*(w)}
    -A_2^*\circ\ell_{F+A_1^*(w)}^* - \ell_{A_2^*(w)}\circ A_1 \notag\\
    &\hphantom{\ell_{F+A_1^*(w)}\circ A_2 - A_1^*\circ\ell^*_{A_2^*(w)}
    -A_2^*\circ\ell_{F+A_1^*(w)}^*}=B_2(F+A_1^*(w),\cdot) + B_1(A_2^*(w),\cdot) \\
    \label{eq:14}
    &\ell_{A_2^*(w)}\circ A_2 - A_2^*\circ\ell_{A_2^*(w)}^*=
    B_2(A_2^*(w),\cdot).
  \end{align}
\end{subequations}
\end{lemma}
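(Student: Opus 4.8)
The plan is to derive all three identities from the single bivector relation~\eqref{eq:5}, written separately for $A_1$ and $A_2$, by substituting the deformed data $F+A_1^*(w)$ and $A_2^*(w)$ in place of $F$ and expanding. Two facts drive every step: the linearity of the $J^\infty(\pi)$-linearization in its argument, which gives $\ell_{F+A_1^*(w)}=\ell_F+\ell_{A_1^*(w)}$, and the relation $\ell_{A_i^*(w)}=\ell_{A_i^*,w}$ recalled before the lemma; together with the bilinearity of $B_1$ and $B_2$ in their first slot, these let me split the $F$-part from the $w$-part of each composition and then recombine. Throughout I work on $J^\infty(\pi)\times J^\infty(\hat\pi)$, so that ``$=B_i(\,\cdot\,,\,\cdot\,)$'' is an honest operator identity, exactly as in~\eqref{eq:5}.

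For the diagonal identities~\eqref{eq:12} and~\eqref{eq:14} I would proceed as follows. In~\eqref{eq:12}, expanding $\ell_{F+A_1^*(w)}=\ell_F+\ell_{A_1^*(w)}$ and applying~\eqref{eq:5} for $A_1$ peels off the term $B_1(F,\cdot)$, and bilinearity of $B_1$ reduces the claim to
\[
  \ell_{A_1^*(w)}\circ A_1-A_1^*\circ\ell_{A_1^*(w)}^*=B_1(A_1^*(w),\cdot),
\]
i.e.\ to the assertion that~\eqref{eq:5} for $A_1$ survives the replacement $F\mapsto A_1^*(w)$. This is the real content: it holds precisely because the probing datum $A_1^*(w)$ is built from the \emph{same} operator as the bivector. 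I would establish it by rewriting $\ell_{A_1^*(w)}=\ell_{A_1^*,w}$ and feeding the explicit expression for $B_1$ underlying~\eqref{eq:5} (in the skew-symmetric form for $B_1^*$ furnished by the Proposition) through this substitution. Identity~\eqref{eq:14} is the verbatim analogue for $A_2$ with the datum $A_2^*(w)$.

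The off-diagonal identity~\eqref{eq:13} is where the work concentrates and is the main obstacle. Expanding $\ell_{F+A_1^*(w)}$ and stripping the $B_2(F,\cdot)$ contribution via~\eqref{eq:5} for $A_2$ reduces~\eqref{eq:13} to the purely mixed statement
\[
  \bigl[\ell_{A_1^*(w)}\circ A_2-A_2^*\circ\ell_{A_1^*(w)}^*\bigr]
  -A_1^*\circ\ell_{A_2^*(w)}^*-\ell_{A_2^*(w)}\circ A_1
  =B_2(A_1^*(w),\cdot)+B_1(A_2^*(w),\cdot),
\]
which couples the bivector $A_2$ probed by the $A_1$-datum with the bivector $A_1$ probed by the $A_2$-datum. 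Here the naive ``persistence of~\eqref{eq:5}'' fails for each cross term in isolation, since bivector and probing operator no longer coincide; only their sum matches. I expect the decisive step to be matching the cross terms $\ell_{A_2^*(w)}\circ A_1$ and $A_1^*\circ\ell_{A_2^*(w)}^*$ against $B_1(A_2^*(w),\cdot)$ by invoking the adjointness relation~\eqref{eq:7}, $\ell^*_{\Delta,p}(\hat q)=\ell^*_{\Delta^*,\hat q}(p)$, which is exactly the tool for trading a linearization of $A_1^*$ against one of $A_1$ under the pairing, together with the skew-symmetry of the $B_i^*$.

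As an independent check on the signs in the mixed computation, I would verify that the three right-hand sides assemble into the single operator $\tilde B$ attached by~\eqref{eq:5} to a bivector $\tilde A$ on $\tilde{\mathcal E}$ built from $A_1$ and $A_2$ through the block linearizations~\eqref{eq:10}: reading off the $(1,1)$, mixed, and $(2,2)$ blocks of $\ell_{\tilde{\mathcal E}}\circ\tilde A-\tilde A^*\circ\ell_{\tilde{\mathcal E}}^*$ should reproduce~\eqref{eq:12}--\eqref{eq:14} exactly, confirming that the cross-term bookkeeping in~\eqref{eq:13} has been carried out consistently.
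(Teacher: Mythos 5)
Your opening reduction is correct: additivity of the linearization and of $B_i$ in its first slot does reduce \eqref{eq:12} to
\[
\ell_{A_1^*(w)}\circ A_1-A_1^*\circ\ell_{A_1^*(w)}^*=B_1(A_1^*(w),\cdot),
\]
and \eqref{eq:13}, \eqref{eq:14} to their analogues. But your plan for proving these reduced identities has a genuine gap: nowhere do you use the hypothesis that $A_1$ and $A_2$ are \emph{Hamiltonian and compatible}, and without that hypothesis the reduced identities are false. Follow the computation the way the paper does: pair with $\psi_2$, work modulo $\im\hd$, and use \eqref{eq:7} to trade $\ell^*_{A_1^*,w}(\psi_2)$ for $\ell^*_{A_1,\psi_2}(w)$. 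The $w$-part of \eqref{eq:12} then becomes $\langle \ell_{A_1,\psi_2}(A_1(\psi_1))-\ell_{A_1,\psi_1}(A_1(\psi_2)),\,w\rangle$, and the only way to convert this into $\langle B_1^*(\psi_2,\psi_1),A_1^*(w)\rangle$ (hence into $\langle B_1(A_1^*(w),\psi_1),\psi_2\rangle$) is the identity
\[
\ell_{A_1,\psi_2}(A_1(\psi_1))-\ell_{A_1,\psi_1}(A_1(\psi_2))=A_1(B_1^*(\psi_2,\psi_1)),
\]
which, by the bracket formula \eqref{eq:8}, is precisely the statement $[\![A_1,A_1]\!]=0$. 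So ``the probing datum is built from the same operator as the bivector'' is not the mechanism; Hamiltonianity of $A_1$ is, and for a non-Hamiltonian bivector the defect of your reduced identity is exactly the (half) bracket $[\![A_1,A_1]\!](\psi_2,\psi_1)$ paired with $w$. Moreover, ``feeding the explicit expression for $B_1$ through the substitution'' cannot be carried out, because $B_1$ has no explicit expression away from $F$: equation \eqref{eq:5} determines only $B_1(F,\cdot)$, not the operator $B_1$ on arbitrary first arguments. The same omission blocks your mixed case: by \eqref{eq:8} the four cross linearization terms differ from $A_1(B_2^*)+A_2(B_1^*)$ exactly by $[\![A_1,A_2]\!]$, so \eqref{eq:13} needs the compatibility $[\![A_1,A_2]\!]=0$ (and \eqref{eq:14} needs $[\![A_2,A_2]\!]=0$); relation \eqref{eq:7} and skew-symmetry of $B_i^*$ alone cannot close it.

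There is a second, structural gap. You insist on working with ``honest operator identities'' throughout, but every tool you invoke---moving adjoints, identity \eqref{eq:7}, the definition and skew-symmetry of $B_i^*$---is only valid under the pairing $\langle\cdot\,,\cdot\rangle$ modulo $\im\hd$. The paper therefore first proves the pairing identity modulo exact terms, and only afterwards recovers the operator identity, using the observation that both sides are zero-order operators in $\psi_2$ and that the only zero-order operator in $\psi_2$ whose values lie in $\im\hd$ is zero. Your proposal contains no substitute for this step, so even granting the Hamiltonianity input you would only have a cohomological (paired, skew-symmetrized) statement, not the asserted equality of operators. Finally, the ``independent check'' you propose is circular: the block computation of $\ell_{\tilde{\mathcal{E}}}\circ\tilde{A}_i-\tilde{A}_i^*\circ\ell^*_{\tilde{\mathcal{E}}}$ is exactly the proof of Proposition~\ref{sec:kupershm-deform-4}, which \emph{uses} the present lemma; it cannot confirm it.
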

\begin{proof}
  We will only prove the statement for the first operator.  For every
  $\psi_1$,~$\psi_2\in\hat{P}$ we have the following equalities  modulo
  $\im\hd$:
  \begin{align*}
    \langle & \ell_F(A_1(\psi_1))+\ell_{A_1^*,w}(A_1(\psi_1)),\psi_2\rangle
    -\langle\ell_F(A_1(\psi_2))+\ell_{A_1^*,w}(A_1(\psi_2)),\psi_1\rangle \\
    & =\langle(\ell_F\circ A_1-A_1^*\circ\ell_F^*)(\psi_1),\psi_2\rangle
    +\langle A_1(\psi_1),\ell^*_{A_1^*,w}(\psi_2)\rangle-
    \langle A_1(\psi_2),\ell^*_{A_1^*,w}(\psi_1)\rangle \\
    & =\langle B_1(F,\psi_1),\psi_2\rangle
    +\langle A_1(\psi_1),\ell^*_{A_1,\psi_2}(w)\rangle-
    \langle A_1(\psi_2),\ell^*_{A_1,\psi_1}(w)\rangle \\
    & =\langle B_1^*(\psi_2,\psi_1),F\rangle
    +\langle(\ell_{A_1,\psi_2}(A_1(\psi_1))
    -\ell_{A_1,\psi_1}(A_1(\psi_2))),w\rangle \\
    & =\langle B_1^*(\psi_2,\psi_1),F\rangle
    +\langle A_1(B_1^*(\psi_2,\psi_1)),w\rangle \\
    & =\langle B_1^*(\psi_2,\psi_1),F+A_1^*(w)\rangle \\
    & =\langle B_1(F+A_1^*(w),\psi_1),\psi_2\rangle.
  \end{align*}

  The equalities in the statement turn out to be true in view of the fact that
  both sides of the above equalities are linear (zero order) operators in
  $\psi_2$, and the only linear operator which is the composition of $\bar{d}$
  and an operator in $\psi_2$ is the zero operator.

  The other two equalities are proved similarly. \qed
\end{proof}

Of course the right-hand sides of the equations in
Lemma~\ref{sec:kupershm-deform-2} vanish on $\tilde{\mathcal{E}}$, hence the
following corollary.

\begin{corollary}\label{sec:kupershm-deform-5}
  The following operators are selfadjoint on $\tilde{\mathcal{E}}$:
  \begin{align*}
    &\ell_F\circ A_1+\ell_{A_1^*(w)}\circ A_1 \\
    &\ell_F\circ A_2+\ell_{A_1^*(w)}\circ A_2-A_1^*\circ\ell^*_{A_2^*,w} \\
    &\ell_{A_2^*(w)}\circ A_2.
  \end{align*}
\end{corollary}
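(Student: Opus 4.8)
The plan is to obtain the corollary directly from Lemma~\ref{sec:kupershm-deform-2}, reading each displayed operator as a selfadjointness defect modulo $\tilde{\mathcal{E}}$. First I would use additivity of the $J^\infty(\pi)$-linearization, $\ell_{F+A_1^*(w)}=\ell_F+\ell_{A_1^*(w)}$, to rewrite the three operators as $L_1=\ell_{F+A_1^*(w)}\circ A_1$, then $L_2=\ell_{F+A_1^*(w)}\circ A_2-A_1^*\circ\ell^*_{A_2^*,w}$, and $L_3=\ell_{A_2^*(w)}\circ A_2$. Selfadjointness of $L_i$ means $L_i-L_i^*=0$ on $\tilde{\mathcal{E}}$, so the task reduces to computing each difference $L_i-L_i^*$.

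For this computation I would only need the two structural properties of the adjoint recalled in the preliminaries, namely $(\Delta_1\circ\Delta_2)^*=\Delta_2^*\circ\Delta_1^*$ and $\Delta^{**}=\Delta$, together with $\ell^*_{A_2^*,w}=\ell^*_{A_2^*(w)}$. Then $L_1-L_1^*=\ell_{F+A_1^*(w)}\circ A_1-A_1^*\circ\ell_{F+A_1^*(w)}^*$ is literally the left-hand side of~\eqref{eq:12}, and $L_3-L_3^*=\ell_{A_2^*(w)}\circ A_2-A_2^*\circ\ell_{A_2^*(w)}^*$ is the left-hand side of~\eqref{eq:14}. The adjoints of the two summands of $L_2$ are $A_2^*\circ\ell_{F+A_1^*(w)}^*$ and $\ell_{A_2^*(w)}\circ A_1$, so that $L_2-L_2^*$ is meant to assemble into the four-term expression standing on the left-hand side of~\eqref{eq:13}. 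Finally I would invoke the observation made just before the corollary: the right-hand sides of~\eqref{eq:12}--\eqref{eq:14} are the $\mathcal{C}$-differential operators $B_1,B_2$ evaluated on $F+A_1^*(w)$ and on $A_2^*(w)$, and these two arguments are exactly the components of $\tilde F$ defining $\tilde{\mathcal{E}}$; hence every such right-hand side vanishes on $\tilde{\mathcal{E}}$. Therefore $L_i-L_i^*=0$ on $\tilde{\mathcal{E}}$ for $i=1,2,3$.

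The step I expect to be delicate is the second operator. For $L_1$ and $L_3$ the defect $L_i-L_i^*$ coincides with~\eqref{eq:12} and~\eqref{eq:14} at a glance, but for $L_2$ one must keep careful account of the signs and of the order of the adjoints in the two cross terms, so that the contribution of $\ell_{F+A_1^*(w)}\circ A_2$ and that of $A_1^*\circ\ell^*_{A_2^*,w}$, together with their adjoints, reproduce precisely the combination in~\eqref{eq:13} rather than a nearby one; this sign and ordering bookkeeping is where I would slow down and double-check against the Lemma. Once it is verified this is the whole argument, since the vanishing of the right-hand sides on $\tilde{\mathcal{E}}$ is immediate.
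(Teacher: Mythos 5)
Your overall strategy is precisely the paper's own: the paper proves this corollary in one line (the sentence immediately preceding it), by reading each identity of Lemma~\ref{sec:kupershm-deform-2} as the statement that the selfadjointness defect $L_i-L_i^*$ of the corresponding operator equals a right-hand side built from $B_1$, $B_2$ applied to the components $F+A_1^*(w)$, $A_2^*(w)$ of $\tilde{F}$, which vanishes on $\tilde{\mathcal{E}}$ because these operators are $\mathcal{C}$-differential in their first argument. Your treatment of the first and third operators is complete and correct: your $L_1-L_1^*$ and $L_3-L_3^*$ are literally the left-hand sides of \eqref{eq:12} and \eqref{eq:14}.

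However, the step you flagged and postponed is exactly where the argument fails if carried out against \eqref{eq:13} as printed. Your own computation of the adjoints gives
\begin{equation*}
  L_2-L_2^*=\ell_{F+A_1^*(w)}\circ A_2-A_1^*\circ\ell^*_{A_2^*(w)}
  -A_2^*\circ\ell^*_{F+A_1^*(w)}+\ell_{A_2^*(w)}\circ A_1,
\end{equation*}
with a \emph{plus} sign on the last term, whereas \eqref{eq:13} carries $-\ell_{A_2^*(w)}\circ A_1$; the two expressions differ by $2\,\ell_{A_2^*(w)}\circ A_1$, which has no reason to vanish on $\tilde{\mathcal{E}}$. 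So citing \eqref{eq:13} verbatim does not close the second case, and your proposal leaves precisely this verification undone. The resolution is that the sign in \eqref{eq:13} is a misprint: repeating the proof of Lemma~\ref{sec:kupershm-deform-2} for the mixed case (antisymmetrize $\langle\ell_{F+A_1^*(w)}(A_2(\psi_1))+\ell_{A_2^*(w)}(A_1(\psi_1)),\psi_2\rangle$ in $\psi_1$, $\psi_2$ and use $[\![A_1,A_2]\!]=0$ via \eqref{eq:8}) produces the identity with $+\ell_{A_2^*(w)}\circ A_1$ on the left; this corrected sign is also what the paper's Eq.~\eqref{eq:17}, stated as the linearization of \eqref{eq:13}, reflects, since there the terms $\ell_{\ell_{A_2^*(w)},A_1(\psi_1)}+\ell_{A_2^*(w)}\circ\ell_{A_1,\psi_1}$ enter with plus signs. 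With the corrected identity, $L_2-L_2^*$ coincides exactly with its left-hand side and your argument goes through. In short: right approach, two of three cases done, but the ``bookkeeping'' you deferred is the entire content of the second case, and completing it requires detecting and repairing the sign in \eqref{eq:13} rather than matching that equation as stated.
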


In our computations we will also need the linearization of
Eqs.~(\ref{eq:11}).
\begin{lemma}
  The linearization of Eqs.~(\ref{eq:11}) computed in $\psi_1\in\hat{P}$
  yields the equations
  \begin{subequations}\label{eq:15}
  \begin{align}
    \label{eq:16}
    \begin{split}
    & \ell_{\ell_{F+A_1^*(w)},A_1(\psi_1)}+\ell_{F+A_1^*(w)}\circ\ell_{A_1,\psi_1}
    -\ell_{A_1^*,\ell_{F+A_1^*(w)}^*(\psi_1)}-A_1^*\circ\ell_{\ell_{F+A_1^*(w)}^*,\psi_1}
    \\
    &\hphantom{\ell_{\ell_{F+A_1^*(w)},A_1(\psi_1)}+\ell_{F+A_1^*(w)}\circ\ell_{A_1,\psi_1}
      -\ell_{A_1^*,\ell_{F+A_1^*(w)}^*(\psi_1)}}
    =B_1(\cdot,\psi_1)\circ\ell_{F+A_1^*(w)}
    \end{split}
    \\
    \label{eq:17}
    \begin{split}
    & \ell_{\ell_{F+A_1^*(w)},A_2(\psi_1)} + \ell_{F+A_1^*(w)}\circ\ell_{A_2,\psi_1}
    -\ell_{A_1^*,\ell^*_{A_2^*(w)}(\psi_1)} -
    A_1^*\circ\ell_{\ell_{A_2^*(w)}^*,\psi_1} \\
    &\hphantom{\ell_{\ell_{F+A_1^*(w)},A_2(\psi_1)}}
    -\ell_{A_2^*,\ell_{F+A_1^*(w)}^*(\psi_1)} - A_2^*\circ\ell_{\ell_{F+A_1^*(w)}^*,\psi_1}
    +\ell_{\ell_{A_2^*(w)},A_1(\psi_1)} + \ell_{A_2^*(w)}\circ\ell_{A_1,\psi_1} \\
    &\hphantom{\ell_{\ell_{F+A_1^*(w),A_2(\psi_1)}}}
    =B_2(\cdot,\psi_1)\circ\ell_{F+A_1^*(w)} +
    B_1(\cdot,\psi_1)\circ\ell_{A_2^*(w)}
    \end{split}
    \\
    \label{eq:18}
    & \ell_{\ell_{A_2^*(w)},A_2(\psi_1)}+\ell_{A_2^*(w)}\circ\ell_{A_2,\psi_1}
    -\ell_{A_2^*,\ell_{A_2^*(w)}^*(\psi_1)}-A_2^*\circ\ell_{\ell_{A_2^*(w)}^*,\psi_1}
    =B_2(\cdot,\psi_1)\circ\ell_{A_2^*(w)}
  \end{align}
  \end{subequations}
up to terms which vanish on $\tilde{\mathcal{E}}$.
\end{lemma}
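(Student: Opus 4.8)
The plan is to obtain the identities \eqref{eq:16}--\eqref{eq:18} by linearizing the operator identities \eqref{eq:12}--\eqref{eq:14} of Lemma~\ref{sec:kupershm-deform-2} along the $u$-variables (the non-tilde, i.e.\ $J^\infty(\pi)$, linearization dictated by our notational convention), after evaluating each of them on an arbitrary but fixed cosymmetry $\psi_1\in\hat P$. Because \eqref{eq:12}--\eqref{eq:14} are genuine operator identities on $J^\infty(\pi)\times J^\infty(\hat\pi)$ and not merely relations on $\tilde{\mathcal{E}}$, they may be differentiated freely. The only tool required is the Leibniz rule for the universal linearization, $\ell_{\Delta(p)}=\ell_{\Delta,p}+\Delta\circ\ell_p$, which is just the defining relation $\ell_{\Delta,p}=\ell_{\Delta(p)}-\Delta\circ\ell_p$ read as a product rule (equivalently, the derivation property of $\Ev_\chi$, $\chi\in\kappa$, applied to $\Delta(p)$). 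Since $\psi_1$ is frozen, $\ell_{\psi_1}=0$, so every nested linearization collapses; concretely $\ell_{A_i(\psi_1)}=\ell_{A_i,\psi_1}$ and $\ell_{\ell^*_{F+A_1^*(w)}(\psi_1)}=\ell_{\ell^*_{F+A_1^*(w)},\psi_1}$.

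I would first treat the model case \eqref{eq:12}$\to$\eqref{eq:16}. Evaluating \eqref{eq:12} on $\psi_1$ gives the equality of $P(\pi)$-valued functions $\ell_{F+A_1^*(w)}(A_1(\psi_1))-A_1^*(\ell^*_{F+A_1^*(w)}(\psi_1))=B_1(F+A_1^*(w),\psi_1)$. Applying the linearization and the product rule to each composition splits the first summand into $\ell_{\ell_{F+A_1^*(w)},A_1(\psi_1)}+\ell_{F+A_1^*(w)}\circ\ell_{A_1,\psi_1}$ and the second summand, which enters with a minus sign, into $\ell_{A_1^*,\ell^*_{F+A_1^*(w)}(\psi_1)}+A_1^*\circ\ell_{\ell^*_{F+A_1^*(w)},\psi_1}$; these are precisely the four left-hand terms of \eqref{eq:16}, with signs $+,+,-,-$.

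The right-hand side is where the clause ``up to terms which vanish on $\tilde{\mathcal{E}}$'' originates. Linearizing the bi-differential expression $B_1(F+A_1^*(w),\psi_1)$ produces three groups: the linearization of the coefficients of $B_1$ with both arguments frozen, the linearization in the second (the $\psi_1$) slot, and the linearization in the first slot, $B_1(\ell_{F+A_1^*(w)}(\cdot),\psi_1)=B_1(\cdot,\psi_1)\circ\ell_{F+A_1^*(w)}$. The second group vanishes because $\psi_1$ is frozen, while the first still carries the undifferentiated argument $F+A_1^*(w)$ and hence restricts to zero on $\tilde{\mathcal{E}}$; only the third survives, yielding the right-hand side of \eqref{eq:16}. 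The case \eqref{eq:14}$\to$\eqref{eq:18} is word-for-word the same, now discarding the coefficient-linearization of $B_2(A_2^*(w),\cdot)$.

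The remaining case \eqref{eq:13}$\to$\eqref{eq:17} uses no new idea but is the most laborious: the left of \eqref{eq:13} is a sum of four compositions and its right a sum of two $B$-terms, so the product rule delivers the eight linearized summands and the two surviving $B$-compositions of \eqref{eq:17}. I expect the main obstacle to be purely organizational bookkeeping. For each of the four compositions one must correctly decide which factor is the operator $\Delta$ and which is the argument $p$, so as to place the coefficient piece $\ell_{\Delta,p}$ and the operator piece $\Delta\circ\ell_{\,\cdot\,,\psi_1}$ with the right sign; and on the right one must verify that it is exactly the coefficient-linearizations of $B_2(F+A_1^*(w),\cdot)$ and $B_1(A_2^*(w),\cdot)$ that carry the undifferentiated $F+A_1^*(w)$ and $A_2^*(w)$, so that precisely those pieces drop modulo $\tilde{\mathcal{E}}$ while $B_2(\cdot,\psi_1)\circ\ell_{F+A_1^*(w)}$ and $B_1(\cdot,\psi_1)\circ\ell_{A_2^*(w)}$ remain. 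Keeping the signs and operator/argument roles straight across all eight terms is the only genuine difficulty.
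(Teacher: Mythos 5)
Your overall strategy is the same as the paper's: evaluate Eqs.~\eqref{eq:11} at $\psi_1$, linearize using the Leibniz rule $\ell_{\Delta(p)}=\ell_{\Delta,p}+\Delta\circ\ell_p$, and discard the coefficient-linearization of $B_i$ because it carries total derivatives of $F+A_1^*(w)$ or $A_2^*(w)$ and hence vanishes on $\tilde{\mathcal{E}}$. The gap is in your handling of the chain-rule terms through $\psi_1$. You claim that since $\psi_1$ is ``arbitrary but fixed'' it is ``frozen'', so $\ell_{\psi_1}=0$ and the nested linearizations collapse ($\ell_{A_1(\psi_1)}=\ell_{A_1,\psi_1}$, etc.). For a general $\psi_1\in\hat{P}$ this is false: elements of $\hat{P}$ --- in particular the generating functions of conservation laws, to which the lemma is later applied in the proof of Theorem~\ref{sec:kupershm-deform-3} --- depend on the jet variables, and fixing such an element does not make its linearization vanish, because the linearization is a derivative with respect to $u$ and a fixed but $u$-dependent $\psi_1$ still varies. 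Consequently the linearization of Eq.~\eqref{eq:12} computed at $\psi_1$ contains, besides the four terms you list, the extra terms $\ell_{F+A_1^*(w)}\circ A_1\circ\ell_{\psi_1}-A_1^*\circ\ell^*_{F+A_1^*(w)}\circ\ell_{\psi_1}$ on the left and $B_1(F+A_1^*(w),\cdot)\circ\ell_{\psi_1}$ on the right, and your argument gives no reason for them to disappear. (The right-hand leftover does vanish on $\tilde{\mathcal{E}}$, since the coefficients of $B_1(F+A_1^*(w),\cdot)$ involve total derivatives of $F+A_1^*(w)$; but the left-hand leftovers cannot be recognized as $\tilde{\mathcal{E}}$-vanishing without again invoking \eqref{eq:12}.)

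The missing observation --- the one genuinely non-mechanical point of the paper's proof --- is that these leftover terms cancel against each other exactly by virtue of Eq.~\eqref{eq:12} itself: composing \eqref{eq:12} on the right with $\ell_{\psi_1}$ shows that the two left-hand $\ell_{\psi_1}$-terms equal the right-hand one, so all of them can be removed from the linearized identity. Alternatively, your derivation can be repaired by stating that what you actually compute is the coefficient linearization $\Delta\mapsto\ell_{\Delta,\psi_1}$ applied to both sides of the \emph{operator} identity \eqref{eq:12}: this operation is well defined on operators (equal operators have equal coefficient linearizations, since $\ell_{\Delta,\psi_1}=\ell_{\Delta(\psi_1)}-\Delta\circ\ell_{\psi_1}$), it obeys the product rule $\ell_{\Delta\circ\square,\psi_1}=\ell_{\Delta,\square(\psi_1)}+\Delta\circ\ell_{\square,\psi_1}$, and by construction it never produces $\ell_{\psi_1}$-terms. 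Either repair is short, but one of them must be made explicitly; the justification ``$\psi_1$ frozen $\Rightarrow\ell_{\psi_1}=0$'' is not valid. Apart from this point, your bookkeeping --- the four left-hand terms with signs $+,+,-,-$, the three-group decomposition of the right-hand side, and the analogous treatment of \eqref{eq:13} and \eqref{eq:14} --- agrees with the paper's computation.
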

\begin{proof}
  We make use of the formula
  \begin{displaymath}
    \ell_{\Delta\circ\square(\alpha)} = \ell_{\Delta,\square(\alpha)} +
    \Delta\circ\ell_{\square,\alpha} +\Delta\circ\square\circ\ell_\alpha.
  \end{displaymath}
  Let us apply the above formula to Eq.~\eqref{eq:12} computed at $\psi_1$:
  \begin{multline*}
    \ell_{\ell_{F+A_1^*(w)},A_1(\psi_1)}+\ell_{F+A_1^*(w)}\circ\ell_{A_1,\psi_1}
    +\ell_{F+A_1^*(w)}\circ A_1\circ\ell_{\psi_1} \\
    -\ell_{A_1^*,\ell_{F+A_1^*(w)}^*(\psi_1)}-A_1^*\circ\ell_{\ell_{F+A_1^*(w)}^*,\psi_1}
    -A_1^*\circ\ell_{F+A_1^*(w)}^*\circ\ell_{\psi_1}
    \\
    =\ell_{B_1(\cdot,\cdot),(F+A_1^*(w),\psi_1)} +
    B_1(\cdot,\psi_1)\circ\ell_{F+A_1^*(w)}
    +B_1(F+A_1^*(w),\cdot)\circ\ell_{\psi_1}.
  \end{multline*}
  We obtain Eq.~\eqref{eq:16} by observing that the term
  $\ell_{B_1(\cdot,\cdot),(F+A_1^*(w),\psi_1)}$ vanishes on $\tilde{\mathcal{E}}$
  and that terms composed with $\ell_{\psi_1}$ cancel by virtue of
  Eq.~\eqref{eq:12}. The two remaining equations can be derived with
  similar reasoning.\qed
\end{proof}

Now let us consider the operators
$\tilde{A}_1,~\tilde{A}_2\colon \hat{P}\oplus \hat{P} \to
\kappa\oplus\hat{\kappa}$ defined by
\begin{equation}\label{eq:19}
    \tilde{A}_1=
  \begin{pmatrix}
    A_1 & -A_1 \\
    0 & \ell_{F+A_1^*(w)+A_2^*(w)}^*
  \end{pmatrix},
  \qquad
  \tilde{A}_2=
  \begin{pmatrix}
    A_2 & -A_2 \\
    -\ell_{F+A_1^*(w)+A_2^*(w)}^* & 0
  \end{pmatrix}.
\end{equation}

\begin{proposition}
  \label{sec:kupershm-deform-4}
    We have the equalities
    \begin{gather}
      \label{eq:20}
      \ell_{\tilde{F}}\circ\tilde{A}_1
      - \tilde{A}_1^*\circ\ell^*_{\tilde{F}}
      = \tilde{B}_1(\tilde{F},\cdot) \\
      \label{eq:21}
      \ell_{\tilde{F}}\circ\tilde{A}_2
      - \tilde{A}_2^*\circ\ell^*_{\tilde{F}}
      = \tilde{B}_2(\tilde{F},\cdot)
    \end{gather}
    where
    \begin{align}
      & \tilde{B}_1((p_1,p_1'),(\psi_2,\psi_2'))=\notag \\
      &\hphantom{\tilde{B}_1}
       (B_1(p_1,\psi_2)-B_1(p_1,\psi_2'),
    -B_1(p_1,\psi_2)-B_2(p_1,\psi_2')
    -B_1(p_1',\psi_2')-B_2(p_1',\psi_2'))\label{eq:22} \\
      & \tilde{B}_2((p_1,p_1'),(\psi_2,\psi_2'))=\notag \\
      &\hphantom{\tilde{B}_2}
       (B_1(p_1,\psi_2)+B_2(p_1,\psi_2)
    +B_1(p_1',\psi_2)+B_2(p_1',\psi_2'),
    B_2(p_1',\psi_2)-B_2(p_1',\psi_2'))\label{eq:23}
    \end{align}
    It follows that the operators $\tilde{A}_1$, $\tilde{A}_2$ define two
    variational bivectors on $\tilde{\mathcal{E}}$.
\end{proposition}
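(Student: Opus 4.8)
The plan is to verify the operator identities \eqref{eq:20}--\eqref{eq:21} by a direct blockwise computation and then read off the bivector property as an immediate consequence. First I would record all of the operators as $2\times2$ matrices of $\mathcal{C}$-differential operators on $J^\infty(\pi)\times J^\infty(\hat{\pi})$: the linearization $\ell_{\tilde F}$ and its adjoint $\ell^*_{\tilde F}$ are the matrices \eqref{eq:10}, the operators $\tilde A_1,\tilde A_2$ are \eqref{eq:19}, and the adjoints $\tilde A_i^*$ are obtained by transposing the block pattern and adjointing each entry, e.g. $\tilde A_1^*=\left(\begin{smallmatrix}A_1^* & 0\\ -A_1^* & \ell_{F+A_1^*(w)+A_2^*(w)}\end{smallmatrix}\right)$. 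Two elementary facts feed every step: linearization is additive, so $\ell_{F+A_1^*(w)+A_2^*(w)}=\ell_{F+A_1^*(w)}+\ell_{A_2^*(w)}$, which lets me split the corner entries of \eqref{eq:19}; and the $w$-linearizations of $A_1^*(w),A_2^*(w)$ are simply $A_1^*,A_2^*$, which is what produces the block shape of \eqref{eq:10}.

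Next I would multiply out $\ell_{\tilde F}\circ\tilde A_i$ and $\tilde A_i^*\circ\ell^*_{\tilde F}$ and subtract them entry by entry, simplifying each of the four resulting entries with the three identities of Lemma~\ref{sec:kupershm-deform-2}, namely \eqref{eq:12}--\eqref{eq:14}. For $\tilde A_1$ the two off-diagonal entries and the top-left entry are easy: each is, up to sign, exactly the left-hand side of \eqref{eq:12}, hence equals $\pm B_1(F+A_1^*(w),\cdot)$, and these reproduce the first component and the cross terms packaged in \eqref{eq:22}. For $\tilde A_2$ the analogous easy entries collapse, via \eqref{eq:14}, to $\pm B_2(A_2^*(w),\cdot)$, matching the off-diagonal and bottom-right parts of \eqref{eq:23}.

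Granting \eqref{eq:20}--\eqref{eq:21} with the stated $\tilde B_1,\tilde B_2$, the concluding assertion is then immediate. Each $\tilde A_i$ is manifestly $\mathcal{C}$-differential, and $\tilde B_i(\tilde F,\cdot)$ is linear in its first argument $\tilde F$, so it vanishes identically on $\tilde{\mathcal E}=\{\tilde F=0\}$; restricting \eqref{eq:20}--\eqref{eq:21} to $\tilde{\mathcal E}$ therefore gives $\ell_{\tilde{\mathcal E}}\circ\tilde A_i=\tilde A_i^*\circ\ell^*_{\tilde{\mathcal E}}$, which is the defining condition \eqref{eq:4} for a variational bivector (well-defined modulo operators $\square\circ\ell^*_{\tilde{\mathcal E}}$).

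The hard part will be the remaining diagonal entry in each case: the bottom-right entry for $\tilde A_1$ and the top-left entry for $\tilde A_2$. Here several $B$-terms must be assembled at once; for $\tilde A_1$ the target is minus the sum of the left-hand sides of \eqref{eq:13} and \eqref{eq:14}, and for $\tilde A_2$ it is the sum of the left-hand sides of \eqref{eq:12} and \eqref{eq:13}. In my trial computation this collection does not close purely formally: it leaves a residual $\ell_{A_2^*(w)}\circ A_1$ together with its adjoint $A_1^*\circ\ell_{A_2^*(w)}^*$. The way to dispose of these is to play the identity \eqref{eq:13} against the self-adjointness recorded in Corollary~\ref{sec:kupershm-deform-5} (selfadjointness of $\ell_{F+A_1^*(w)}\circ A_2-A_1^*\circ\ell^*_{A_2^*(w)}$ on $\tilde{\mathcal E}$), so that the residual is absorbed on $\tilde{\mathcal E}$ and the equalities \eqref{eq:20}--\eqref{eq:21} are read as holding on $\tilde{\mathcal E}$, equivalently up to terms proportional to $\tilde F$. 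Getting the signs and the pairing of adjoints right in this single entry, and confirming that what survives is exactly the skew combination written in \eqref{eq:22}--\eqref{eq:23}, is the real content of the argument; the rest is bookkeeping.
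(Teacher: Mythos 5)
Your matrices, the blockwise expansion, and the identification of which entry follows from which identity of Lemma~\ref{sec:kupershm-deform-2} coincide with the paper's own proof, and your concluding paragraph (restriction to $\tilde{\mathcal{E}}$ gives \eqref{eq:4}) is fine. The genuine gap is exactly where you located the difficulty: the remaining diagonal entry. The residual $\ell_{A_2^*(w)}\circ A_1$ (with its adjoint) that your trial computation leaves over is a symptom of the sign of that term in \eqref{eq:13}, not something to be discarded on shell. Re-deriving the middle identity by the very method of the proof of Lemma~\ref{sec:kupershm-deform-2} (pair against $\psi_2$, skew-symmetrize, use \eqref{eq:7} and $[\![A_1,A_2]\!]=0$) gives
\begin{equation*}
  \ell_{F+A_1^*(w)}\circ A_2+\ell_{A_2^*(w)}\circ A_1
  -A_1^*\circ\ell^*_{A_2^*(w)}-A_2^*\circ\ell^*_{F+A_1^*(w)}
  =B_2(F+A_1^*(w),\cdot)+B_1(A_2^*(w),\cdot),
\end{equation*}
i.e.\ the left-hand side is $Y-Y^*$ with $Y=\ell_{F+A_1^*(w)}\circ A_2-A_1^*\circ\ell^*_{A_2^*(w)}$; note that this form, and not the printed sign in \eqref{eq:13}, is what Corollary~\ref{sec:kupershm-deform-5} asserts on $\tilde{\mathcal{E}}$. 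With this version of the identity the hard entry is literally minus the sum of its left-hand side and that of \eqref{eq:14} (for $\tilde{A}_1$), respectively the sum of it and the left-hand side of \eqref{eq:12} (for $\tilde{A}_2$), and the computation closes \emph{exactly}, as an operator identity on $J^\infty(\pi)\times J^\infty(\hat{\pi})$, with no residual at all. That is the paper's proof.

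Your proposed repair --- playing \eqref{eq:13} against Corollary~\ref{sec:kupershm-deform-5} so that the residual is ``absorbed on $\tilde{\mathcal{E}}$'' --- fails on two counts. First, Corollary~\ref{sec:kupershm-deform-5} is nothing but Lemma~\ref{sec:kupershm-deform-2} restricted to $\tilde{\mathcal{E}}$, so it adds no information; combining it with \eqref{eq:13} as printed forces $\ell_{A_2^*(w)}\circ A_1=0$ on $\tilde{\mathcal{E}}$, which is false (already for KdV, with $A_1=D_x$, this composition is a nonzero operator whose vanishing is not a consequence of $\tilde{F}=0$), so the ``absorption'' exploits an inconsistency between the two cited statements rather than proving anything. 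Second, even if the residual could legitimately be dropped on shell, you would only obtain \eqref{eq:20}--\eqref{eq:21} modulo operators vanishing on $\tilde{\mathcal{E}}$; since the right-hand sides $\tilde{B}_i(\tilde{F},\cdot)$ themselves vanish there, this weakened statement amounts to $\ell_{\tilde{\mathcal{E}}}\circ\tilde{A}_i=\tilde{A}_i^*\circ\ell^*_{\tilde{\mathcal{E}}}$ only. That yields the final sentence of the Proposition (the bivector property) but not its actual content: the explicit off-shell operators \eqref{eq:22}--\eqref{eq:23} are needed later, since their adjoints (Lemma~\ref{sec:kupershm-deform-7}) enter the Schouten brackets $[\![\tilde{A}_i,\tilde{A}_j]\!]$ in the proof of Theorem~\ref{sec:kupershm-deform-3}. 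So the correct move is to fix the sign and close the computation identically, not to weaken the claim to an on-shell one.
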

\begin{proof}
  We have
  \begin{align*}
    \ell_{\tilde{F}} & \circ\tilde{A}_1
      - \tilde{A}_1^*\circ\ell^*_{\tilde{F}} \\
      & =
      \begin{pmatrix}
    \ell_{F+A^*_1(w)} & A_1^* \\
    \ell_{A_2^*(w)} & A_2^*
  \end{pmatrix}\cdot
  \begin{pmatrix}
    A_1 & -A_1 \\
    0 & \ell_{F+A_1^*(w)+A_2^*(w)}^*
  \end{pmatrix} \\
  & \hphantom{\begin{pmatrix}
    \ell_{F+A^*_1(w)} & A_1^* \\
    \ell_{A_2^*(w)} & A_2^*
  \end{pmatrix}\cdot}
  -
  \begin{pmatrix}
    A_1^* & 0 \\
    -A_1^* & \ell_{F+A_1^*(w)+A_2^*(w)}
  \end{pmatrix}
  \cdot \begin{pmatrix}
    \ell^*_{F+A^*_1(w)} & \ell^*_{A_2^*(w)} \\
    A_1 & A_2
  \end{pmatrix} \\
  & =
  \begin{pmatrix}
    B_1(F+A_1^*(w),\cdot) & - B_1(F+A_1^*(w),\cdot) \\
    - B_1(F+A_1^*(w),\cdot) & - B_2(F+A_1^*(w),\cdot) -(B_1+B_2)(A_2^*(w),\cdot)
  \end{pmatrix},
  \end{align*}
  where the last equality is obtained by Lemma~\ref{sec:kupershm-deform-2}.
Eqs.~(\ref{eq:21}) and~(\ref{eq:23}) can be proved in the same way.

The last statement follows directly from the definitions (Eqs.~(\ref{eq:4})
and~(\ref{eq:5})).\qed
\end{proof}

Our next task is to prove that the bivectors from the above proposition endow
$\tilde{\mathcal{E}}$ with a bi-Hamiltonian structure. To this aim we have to
compute the Schouten brackets $[\![\tilde{A}_i,\tilde{A}_j]\!]$ for $i$, $j=1$,
$2$. To do this, we have to compute the linearization
$\ell_{\tilde{A}_i,(\psi_1,\psi_1')}$ and the operators $\tilde{B}_i^*$, with
$i=1$, $2$ (here the adjoint is taken with respect to the first argument). We
have
\begin{equation}\label{eq:24}
  \ell_{\tilde{A}_1,(\psi_1,\psi_1')}
  =
  \begin{pmatrix}
    \ell_{A_1,\psi_1}-\ell_{A_1,\psi_1'} & 0 \\
    \ell_{\ell^*_{F+A_1^*(w)+A^*_2(w)},\psi_1'} & \ell^*_{A_1+A_2,\psi_1'}
  \end{pmatrix},
\end{equation}
and
\begin{equation}\label{eq:25}
    \ell_{\tilde{A}_2,(\psi_1,\psi_1')}
  =
  \begin{pmatrix}
    \ell_{A_2,\psi_1}-\ell_{A_2,\psi_1'} & 0 \\
    -\ell_{\ell^*_{F+A_1^*(w)+A^*_2(w)},\psi_1} & -\ell^*_{A_1+A_2,\psi_1}
  \end{pmatrix}
\end{equation}
directly from the definitions.

\begin{lemma}\label{sec:kupershm-deform-7}
  We have
   \begin{align}
      & \tilde{B}_1^*((\psi_1,\psi_1'),(\psi_2,\psi_2'))\notag \\
      &\hphantom{\tilde{B}_1^*}=
      (B_1^*(\psi_1,\psi_2)-B_1^*(\psi_1,\psi_2')
       -B_1^*(\psi_1',\psi_2)-B_2^*(\psi_1',\psi_2'),
       -B_1^*(\psi_1',\psi_2')-B_2^*(\psi_1',\psi_2'))
       \label{eq:26} \\
      & \tilde{B}_2^*((\psi_1,\psi_1'),(\psi_2,\psi_2'))\notag \\
      &\hphantom{\tilde{B}_2^*}=
      (B_1^*(\psi_1,\psi_2)+B_2^*(\psi_1,\psi_2),
       B_1^*(\psi_1,\psi_2)+B_2^*(\psi_1,\psi_2')
       +B_2^*(\psi_1',\psi_2)-B_2^*(\psi_1',\psi_2'))
    \label{eq:27}
    \end{align}
\end{lemma}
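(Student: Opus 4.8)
The plan is to read off $\tilde{B}_1^*$ and $\tilde{B}_2^*$ directly from the definition of the partial adjoint, using nothing beyond the explicit formulas~\eqref{eq:22}--\eqref{eq:23} for $\tilde{B}_1$, $\tilde{B}_2$ together with the defining relation of $B_a^*$. Recall that on $\tilde{P}=P\oplus P$ the pairing with $\hat{\tilde{P}}=\hat{P}\oplus\hat{P}$ splits as $\langle(\psi_1,\psi_1'),(q,q')\rangle=\langle\psi_1,q\rangle+\langle\psi_1',q'\rangle$, so that, modulo $\im\hd$, the operator $\tilde{B}_i^*$ is characterized by
\[
  \langle(\psi_1,\psi_1'),\tilde{B}_i((p_1,p_1'),(\psi_2,\psi_2'))\rangle
  =\langle\tilde{B}_i^*((\psi_1,\psi_1'),(\psi_2,\psi_2')),(p_1,p_1')\rangle.
\]

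First I would substitute~\eqref{eq:22} into the left-hand side and expand by the splitting of the pairing; this turns it into a sum of scalar terms, each of the shape $\langle\psi_1,B_a(p,\psi)\rangle$ or $\langle\psi_1',B_a(p,\psi)\rangle$ with $a\in\{1,2\}$, $p\in\{p_1,p_1'\}$ and $\psi\in\{\psi_2,\psi_2'\}$. Next, to every such term I would apply the relation $[\langle\chi,B_a(p,\psi)\rangle]=[\langle B_a^*(\chi,\psi),p\rangle]$, which is simply the definition of the adjoint of $B_a$ taken in its first argument: the second slot $\psi$ is held fixed while the cosymmetry $\chi\in\{\psi_1,\psi_1'\}$ migrates into the first slot of $B_a^*$. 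After this rewriting each term reads $\langle B_a^*(\chi,\psi),p\rangle$ with $p\in\{p_1,p_1'\}$, and by the splitting of the pairing the sum of all coefficients of $p_1$ is the first component of $\tilde{B}_1^*$, while the sum of all coefficients of $p_1'$ is its second component. Reading these two sums off reproduces~\eqref{eq:26}. The derivation of $\tilde{B}_2^*$ is verbatim the same, starting from~\eqref{eq:23} and yielding~\eqref{eq:27}.

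I expect no essential difficulty here: the computation is purely bookkeeping, all of it taking place modulo $\im\hd$. The one point demanding care is that $B_a^*$ is the adjoint only in the \emph{first} argument, so throughout the manipulation the second arguments $\psi_2,\psi_2'$ must be kept frozen and never interchanged with the first arguments $\psi_1,\psi_1'$ that become the first slots of $B_a^*$. Correctly matching the four index patterns $(\psi_1,\psi_2)$, $(\psi_1,\psi_2')$, $(\psi_1',\psi_2)$, $(\psi_1',\psi_2')$ against the individual terms of~\eqref{eq:22}--\eqref{eq:23} is the whole substance of the verification.
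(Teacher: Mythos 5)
Your proposal is correct and takes essentially the same route as the paper: the paper merely organizes the identical bookkeeping by writing $\tilde{B}_i(\cdot\,,(\psi_2,\psi_2'))$ as a $2\times 2$ matrix of $\mathcal{C}$-differential operators acting on $(p_1,p_1')$ with the second arguments frozen, and then takes the matrix adjoint (transpose with entry-wise adjoints in the first argument), which is precisely the pairing computation you carry out term by term. The coefficient-collection you describe reproduces~\eqref{eq:26} and~\eqref{eq:27} exactly, so there is no gap.
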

\begin{proof}
  In fact we can write
  \begin{displaymath}
    \tilde{B}_1((p_1,p_1'),(\psi_2,\psi_2'))
    =
    \begin{pmatrix}
      B_1(\cdot,\psi_2)-B_1(\cdot,\psi_2') & 0 \\
      -B_1(\cdot,\psi_2)-B_2(\cdot,\psi_2') &
    -B_1(\cdot,\psi_2')-B_2(\cdot,\psi_2')
    \end{pmatrix}
    \cdot
    \begin{pmatrix}
      p_1 \\ p_1'
    \end{pmatrix}.
  \end{displaymath}
  It follows that
  \begin{displaymath}
    \tilde{B}_1^*((\psi_1,\psi_1'),(\psi_2,\psi_2'))
    =
    \begin{pmatrix}
      B_1^*(\cdot,\psi_2)-B_1^*(\cdot,\psi_2') &
       -B_1^*(\cdot,\psi_2)-B_2^*(\cdot,\psi_2') \\
       0 & -B_1^*(\cdot,\psi_2')-B_2^*(\cdot,\psi_2')
    \end{pmatrix}
    \cdot
    \begin{pmatrix}
      \psi_1 \\ \psi_1'
    \end{pmatrix}.
  \end{displaymath}
  The other identity follows in an analogous way.\qed
\end{proof}

Note that $\tilde{B}_1^*$ and $\tilde{B}_2^*$ are skew-symmetric with respect
to the interchange of the arguments $(\psi_1,\psi_1')$ and $(\psi_2,\psi_2')$.

\begin{theorem}
  \label{sec:kupershm-deform-3}
  The Kupershmidt deformation~$\tilde{\mathcal{E}}$ is a bi-Hamiltonian
  equation with respect to the variational bivectors $\tilde{A}_1$,
  $\tilde{A}_2$.
\end{theorem}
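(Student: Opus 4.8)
The plan is to verify the three identities that define a bi-Hamiltonian structure, namely $[\![\tilde{A}_1,\tilde{A}_1]\!]=0$, $[\![\tilde{A}_2,\tilde{A}_2]\!]=0$ and $[\![\tilde{A}_1,\tilde{A}_2]\!]=0$, by substituting the data assembled above directly into the Schouten-bracket formula~\eqref{eq:8}. For a pair of arguments $(\psi_1,\psi_1')$, $(\psi_2,\psi_2')\in\hat{P}\oplus\hat{P}$ the bracket $[\![\tilde{A}_i,\tilde{A}_j]\!]$ is built from the operators $\tilde{A}_i$ of~\eqref{eq:19}, their linearizations $\ell_{\tilde{A}_i,(\psi_1,\psi_1')}$ of~\eqref{eq:24}--\eqref{eq:25}, and the skew-symmetric operators $\tilde{B}_i^*$ of Lemma~\ref{sec:kupershm-deform-7}. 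Since every ingredient is a $2\times 2$ block operator, the computation reduces to expanding the two components of the resulting element of $\kappa\oplus\hat{\kappa}$ and showing that each vanishes modulo the bivector equivalence relation, i.e. modulo terms that vanish on $\tilde{\mathcal{E}}$ together with operators of the form~\eqref{eq:6}.

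The organizing idea is that, once the matrices are multiplied out, the terms split into two groups. The first group collects the contributions built only from the diagonal blocks $A_i$ of $\tilde{A}_i$ and from the scalar pieces $B_i^*(\psi_a,\psi_b)$; after regrouping, these reassemble precisely into the base-equation brackets $[\![A_i,A_j]\!]$ evaluated at the various pairings of the $\psi$'s. Because $\mathcal{E}$ is bi-Hamiltonian we have $[\![A_1,A_1]\!]=[\![A_2,A_2]\!]=[\![A_1,A_2]\!]=0$, so this entire group drops out. The second group gathers the genuinely new ``deformation'' terms, coming from the off-diagonal entries $\pm\ell^*_{F+A_1^*(w)+A_2^*(w)}$ of $\tilde{A}_i$ and from the corresponding lower-row entries $\ell_{\ell^*_{F+A_1^*(w)+A_2^*(w)},\psi}$ and $\ell^*_{A_1+A_2,\psi}$ of the linearizations~\eqref{eq:24}--\eqref{eq:25}.

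To dispose of the second group I would invoke the linearized identities~\eqref{eq:15}. These are exactly the relations that rewrite compositions of the shape $\ell_{F+A_1^*(w)}\circ\ell_{A_i,\psi}$ and $A_i^*\circ\ell_{\ell^*,\psi}$ in terms of $B_j(\cdot,\psi)$ composed with the constraint operators, so the mixed terms reorganize into contributions that either cancel in pairs or are proportional to $\ell_{F+A_1^*(w)}$, $\ell_{A_2^*(w)}$ or $A_i^*$ acting through $\tilde{F}$ and hence vanish on $\tilde{\mathcal{E}}$. The skew-symmetry of $\tilde{B}_1^*$ and $\tilde{B}_2^*$ noted after Lemma~\ref{sec:kupershm-deform-7} is used throughout to match the $\psi_1\leftrightarrow\psi_2$ antisymmetric combinations produced by the first four summands of~\eqref{eq:8} against the $\tilde{A}_i(\tilde{B}_j^*(\cdots))$ terms. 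At the end each component of $[\![\tilde{A}_i,\tilde{A}_j]\!]$ is a composition ending in a factor that vanishes on $\tilde{\mathcal{E}}$, so it is zero as a variational multivector and the three bivectors satisfy $[\![\tilde{A}_i,\tilde{A}_j]\!]=0$.

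The main obstacle is bookkeeping rather than conceptual: each of the six summands of~\eqref{eq:8} spawns several blocks once the matrices are multiplied, and the delicate point is feeding~\eqref{eq:15} the correct arguments and signs so that the deformation terms match exactly. I expect the mixed case $[\![\tilde{A}_1,\tilde{A}_2]\!]$ to be the hardest, since there both off-diagonal blocks $\ell^*_{F+A_1^*(w)+A_2^*(w)}$ enter asymmetrically and one must simultaneously use all three linearized identities~\eqref{eq:16}--\eqref{eq:18} together with $[\![A_1,A_2]\!]=0$; confirming that no residual term survives outside the equivalence class~\eqref{eq:6} is the crux of the argument.
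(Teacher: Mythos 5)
Your outline does coincide with the paper's strategy: expand the bracket \eqref{eq:28} blockwise, observe that the first component reassembles into
$[\![A_1,A_2]\!](\psi_1,\psi_2)+[\![A_1,A_2]\!](\psi_1',\psi_2')-[\![A_1,A_2]\!](\psi_1',\psi_2)-[\![A_1,A_2]\!](\psi_1,\psi_2')=0$,
and attack the second component with the linearized identities \eqref{eq:15}. But there is a genuine gap in your mechanism for the second component. You claim that after invoking \eqref{eq:15} the deformation terms are ``proportional to $\ell_{F+A_1^*(w)}$, $\ell_{A_2^*(w)}$ or $A_i^*$ acting through $\tilde{F}$ and hence vanish on $\tilde{\mathcal{E}}$,'' and that ``each component \dots\ is a composition ending in a factor that vanishes on $\tilde{\mathcal{E}}$.'' That is not what happens, and a computation organized around that expectation will not close. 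Two points. First, the identities \eqref{eq:15} cannot be applied as they stand: the second component of the bracket is built from terms of the shapes $\ell_{\ell^*_{F+A_1^*(w)+A_2^*(w)},\psi}(\cdot)$ and $\ell^*_{A_1+A_2,\psi}(\cdot)$, which are not the terms occurring in \eqref{eq:15}; one must first dualize \eqref{eq:15} using \eqref{eq:7} and the self-adjointness of the operator \eqref{eq:29} (this produces the paper's Eqs.~\eqref{eq:30}), a step your sketch omits. Second, after those adjoint identities are used, the surviving residual is
\begin{multline*}
\psi_3'=\doubell(\psi_2',A_1(\psi_1)+A_2(\psi_1'))+\doubell(\psi_2,A_1(\psi_1)+A_2(\psi_1'))\\
-\doubell(\psi_1',A_1(\psi_2)+A_2(\psi_2'))-\doubell(\psi_1,A_1(\psi_2)+A_2(\psi_2')),
\end{multline*}
and this does \emph{not} vanish on $\tilde{\mathcal{E}}$: nothing in it is evaluated on $\tilde{F}$.

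The residual is trivial for a different reason, and distinguishing the two kinds of triviality is the actual crux. By \eqref{eq:10}, $A_1(\psi)+A_2(\psi')$ is precisely the second component of $\ell^*_{\tilde{\mathcal{E}}}(\psi,\psi')$, so $(0,\psi_3')$ is a sum of operators composed with $\ell^*_{\tilde{\mathcal{E}}}$ in one of the two argument slots --- that is, it lies exactly in the equivalence class \eqref{eq:6} of the zero multivector; this is the role of the operators $\square_1,\square_1',\square_2,\square_2'$ introduced at the end of the paper's proof. You do gesture at \eqref{eq:6} in your opening and closing paragraphs, but your operative claims commit throughout to ``vanishing on $\tilde{\mathcal{E}}$,'' which is a strictly different (and here false) statement. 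Until the residual is recognized as factoring through $\ell^*_{\tilde{\mathcal{E}}}$ applied to the arguments, rather than through $\tilde{F}$, the proof does not close; this is a missing idea, not bookkeeping.
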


\begin{proof}
  We have to prove the conditions
  \begin{equation}\label{eq:28}
      \begin{aligned}
    &[\![\tilde{A}_i,\tilde{A}_j]\!]((\psi_1,\psi_1'),(\psi_2,\psi_2'))
    \\
    &\qquad=\ell_{\tilde{A}_i,(\psi_1,\psi_1')}(\tilde{A}_j(\psi_2,\psi_2'))
    -\ell_{\tilde{A}_i,(\psi_2,\psi_2')}(\tilde{A}_j(\psi_1,\psi_1'))
    \\
    &\qquad+\ell_{\tilde{A}_j,(\psi_1,\psi_1')}(\tilde{A}_i(\psi_2,\psi_2'))
    -\ell_{\tilde{A}_j,(\psi_2,\psi_2')}(\tilde{A}_i(\psi_1,\psi_1'))
    \\
    &\qquad-\tilde{A}_i(\tilde{B}_j^*((\psi_1,\psi_1'),(\psi_2,\psi_2')))
    -\tilde{A}_j(\tilde{B}_i^*((\psi_1,\psi_1'),(\psi_2,\psi_2')))
    =0,
  \end{aligned}
  \end{equation}
for $i$, $j=1$, $2$. We will only prove them for $i=1$, $j=2$, other
computations being very similar. First of all we compute
the summands in the expression~(\ref{eq:28}):
 \begin{align*}
    & \ell_{\tilde{A}_1,(\psi_1,\psi_1')}(\tilde{A}_2(\psi_2,\psi_2')) \\
    & \hphantom{\ell}
      = \Bigl((\ell_{A_1,\psi_1}-\ell_{A_1,\psi_1'})(A_2(\psi_2)-A_2(\psi_2')), \\
    & \hphantom{ = ()()}
      \doubell(\psi_1,A_2(\psi_2)-A_2(\psi_2'))
      + \ell^*_{A_1+A_2,\psi_1}(-\ell_{F+A_1^*(w)+A_2^*(w)}^*(\psi_2'))\Bigr),
      \\
    & \ell_{\tilde{A}_2,(\psi_1,\psi_1')}(\tilde{A}_1(\psi_2,\psi_2')) \\
    & \hphantom{\ell}
      = \Bigl((\ell_{A_2,\psi_1}-\ell_{A_2,\psi_1'})(A_1(\psi_2)-A_1(\psi_2')), \\
    & \hphantom{ = ()()}
      -\doubell(\psi_1,A_1(\psi_2)-A_1(\psi_2'))
      - \ell^*_{A_1+A_2,\psi_1}(\ell_{F+A_1^*(w)+A_2^*(w)}^*(\psi_2'))\Bigr), \\
    & \tilde{A}_1(\tilde{B}_2^*((\psi_1,\psi_1'),(\psi_2,\psi_2'))) \\
    & = (A_1(B_2^*(\psi_1,\psi_2)-B_2^*(\psi_1,\psi_2')
            -B_2^*(\psi_1',\psi_2)+B_2^*(\psi_1',\psi_2')), \\
    & \hphantom{= ()}
      \ell_{F+A_1^*(w)+A_2^*(w)}^*(B_1^*(\psi_1,\psi_2)+B_2^*(\psi_1,\psi_2')
      +B_2^*(\psi_1',\psi_2)-B_2^*(\psi_1',\psi_2'))), \\
    & \tilde{A}_2(\tilde{B}_1^*((\psi_1,\psi_1'),(\psi_2,\psi_2'))) \\
    & = (A_2(B_1^*(\psi_1,\psi_2)-B_1^*(\psi_1,\psi_2')
            -B_1^*(\psi_1',\psi_2)+B_1^*(\psi_1',\psi_2')), \\
    & \hphantom{= ()}
      -\ell_{F+A_1^*(w)+A_2^*(w)}^*(B_1^*(\psi_1,\psi_2)-B_1^*(\psi_1,\psi_2')
      -B_1^*(\psi_1',\psi_2)-B_2^*(\psi_1',\psi_2'))),
  \end{align*}
  were we introduced the notation
  $\doubell(\psi_1,\phi_2)=\ell_{\ell^*_{F+A_1^*(w)+A^*_2(w)},\psi_1}(\phi_2)$.
  Let us set
  \begin{displaymath}
    (\phi_3,\psi_3') =
    [\![\tilde{A}_1,\tilde{A}_2]\!]((\psi_1,\psi_1'),(\psi_2,\psi_2')).
  \end{displaymath}
  We have
  \begin{displaymath}
    \phi_3=[\![A_1,A_2]\!](\psi_1,\psi_2)+[\![A_1,A_2]\!](\psi_1',\psi_2')
    -[\![A_1,A_2]\!](\psi_1',\psi_2)-[\![A_1,A_2]\!](\psi_1,\psi_2')=0
  \end{displaymath}
  because $[\![A_1,A_2]\!]=0$.

  As for the second component $\psi_3'$, we first observe that the operator
  \begin{equation}\label{eq:29}
    \doubell(\psi_1',\cdot) \colon \kappa\to\hat{\kappa}
  \end{equation}
  is selfadjoint.  Now, we take the adjoint of
  Eqs.~\eqref{eq:15} and compute them in $\psi_2\in\hat{P}$ using
  Eq.~\eqref{eq:7}. We obtain
  \begin{subequations}\label{eq:30}
  \begin{align}
    \label{eq:31}
    \begin{split}
    & \ell_{\ell_{F+A_1^*(w)}^*,\psi_2}^*(A_1(\psi_1))
    +\ell_{A_1,\psi_1}^*(\ell_{F+A_1^*(w)}^*(\psi_2)) \\
    & -\ell_{A_1,\psi_2}^*(\ell_{F+A_1^*(w)}^*(\psi_1))
    -\ell_{\ell_{F+A_1^*(w)}^*,\psi_1}^*(A_1(\psi_2))
      =\ell_{F+A_1^*(w)}^*(-B_1^*(\psi_1,\psi_2))
    \end{split}
    \\
    \label{eq:32}
    \begin{split}
    & \ell_{\ell_{F+A_1^*(w)}^*,\psi_2}^*(A_2(\psi_1))
    + \ell_{A_2,\psi_1}^*(\ell_{F+A_1^*(w)}^*(\psi_2))
    -\ell_{A_1,\psi_2}^*(\ell^*_{A_2^*(w)}(\psi_1))
    - \ell_{\ell_{A_2^*(w)}^*,\psi_1}^*(A_1(\psi_2)) \\
    & -\ell_{A_2,\psi_2}^*(\ell_{F+A_1^*(w)}^*(\psi_1))
    - \ell_{\ell_{F+A_1^*(w)}^*,\psi_1}^*(A_2(\psi_2))
    + \ell_{\ell_{A_2^*(w)}^*,\psi_2}^*(A_1(\psi_1))
    + \ell_{A_1,\psi_1}^*(\ell_{A_2^*(w)}^*(\psi_2)) \\
    &\hphantom{-\ell_{A_2,\psi_2}^*(\ell_{F+A_1^*(w)}^*(\psi_1))}
    =\ell_{F+A_1^*(w)}^*(-B_2^*(\psi_1,\psi_2))
    +\ell_{A_2^*(w)}^*(-B_1(\psi_1,\psi_2)),
    \end{split}
    \\
    \label{eq:33}
    \begin{split}
    & \ell_{\ell_{A_2^*(w)}^*,\psi_2}^*(A_2(\psi_1))
    +\ell_{A_2,\psi_1}^*(\ell_{A_2^*(w)}^*(\psi_2))
    -\ell_{A_2,\psi_2}^*(\ell_{A_2^*(w)}^*(\psi_1))
    -\ell_{\ell_{A_2^*(w)}^*,\psi_1}^*(A_2(\psi_2)) \\
    &\hphantom{\ell_{\ell_{A_2^*(w)}^*,\psi_2}^*(A_2(\psi_1))}
    =\ell_{A_2^*(w)}^*(-B_2^*(\psi_1,\psi_2)).
  \end{split}
\end{align}
  \end{subequations}

  Then, we sum the above three equations to get one single equation, that we
  compute two times in $(\psi_1,\psi_2')$ and $(\psi_1',\psi_2)$
  respectively. By replacing the result into $\psi_3'$ we obtain
  \begin{align*}
    &\psi_3' = \doubell(\psi_2',A_1(\psi_1)+A_2(\psi_1'))
    +\doubell(\psi_2,A_1(\psi_1)+A_2(\psi_1')) \\
    &\hphantom{ \psi_3' =}
    -\doubell(\psi_1',A_1(\psi_2)+A_2(\psi_2'))
    -\doubell(\psi_1,A_1(\psi_2)+A_2(\psi_2')).
  \end{align*}
  We define the operators
  \begin{displaymath}
    \square_1,\square_1'\colon(\hat{\kappa}\times\kappa)
  \times(\hat{P}\times\hat{P})\to(\kappa\times\hat{\kappa}),\quad
  \square_2,\square_2'\colon(\hat{P}\times\hat{P})
  \times(\hat{\kappa}\times\kappa)\to(\kappa\times\hat{\kappa}),
  \end{displaymath}
  as follows:
  \begin{align*}\label{eq:34}
  & \square_1((\phi_1,\phi_1'),(\psi_2,\psi_2'))=
  (0,-\doubell(\psi_2',\phi_1')),\quad
  \square_2((\psi_1,\psi_1'),(\phi_2,\phi_2'))=
  (0,\doubell(\psi_1',\phi_2')),
  \\
  & \square_1'((\phi_1,\phi_1'),(\psi_2,\psi_2'))=
  (0,\doubell(\psi_2,\phi_1')),\quad
  \square_2'((\psi_1,\psi_1'),(\phi_2,\phi_2'))=
  (0,+\doubell(\psi_1,\phi_2')).
  \end{align*}
  We have
  \begin{align*}
    (0,\psi_3') =
    \square_1(\ell_{\tilde{\mathcal{E}}}^*(\psi_1,\psi_1'),(\psi_2,\psi_2')) +
    \square_2((\psi_1,\psi_1'),\ell_{\tilde{\mathcal{E}}}^*(\psi_2,\psi_2')) \\
    \hphantom{(0,\psi_3') =}
    +\square_1'(\ell_{\tilde{\mathcal{E}}}^*(\psi_1,\psi_1'),(\psi_2,\psi_2')) +
    \square_2'((\psi_1,\psi_1'),\ell_{\tilde{\mathcal{E}}}^*(\psi_2,\psi_2')),
  \end{align*}
hence the three-vector $[\![\tilde{A}_1,\tilde{A}_2]\!]$ is zero up
to trivial terms.\qed
\end{proof}

\begin{theorem}
  \label{sec:kupershm-deform-1}
  Let $\omega_1$,~$\omega_2$,~\dots{} is a Magri hierarchy
  for~$\mathcal{E}$.  Suppose that there exist extensions of $\omega_i$, $A_1$,
  and $A_2$ onto~$J^\infty$ such that on~$J^\infty$ we have
  \begin{equation*}
      \begin{gathered}
        \hd\omega_i=\langle\psi_i,F\rangle, \\
        A_1(\psi_i)=A_2(\psi_{i+1}),
      \end{gathered}
    \end{equation*}
  here $\psi_i$ is an extension of the generating function of~$\omega_i$.
  Then $(\psi_i,-\psi_{i+1})$,
  $i=1$,~$2$,~\dots is a Magri hierarchy for the Kupershmidt
  deformation~$\tilde{\mathcal{E}}$.
\end{theorem}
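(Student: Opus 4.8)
The plan is to check the two defining properties of a Magri hierarchy for~$\tilde{\mathcal{E}}$ separately. Writing $\tilde{\psi}_i=(\psi_i,-\psi_{i+1})$, I must show \textbf{(i)} that each $\tilde{\psi}_i$ is the generating function of a conservation law of~$\tilde{\mathcal{E}}$, and \textbf{(ii)} that the sequence satisfies the Magri relation $\tilde{A}_1(\tilde{\psi}_i)=\tilde{A}_2(\tilde{\psi}_{i+1})$ with respect to the two Hamiltonian bivectors of Theorem~\ref{sec:kupershm-deform-3}.

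For~(i) I would produce an explicit conserved density. Since $\tilde{F}=(F+A_1^*(w),A_2^*(w))$, I compute
\[
  \langle\tilde{\psi}_i,\tilde{F}\rangle
  =\langle\psi_i,F\rangle+\langle\psi_i,A_1^*(w)\rangle
   -\langle\psi_{i+1},A_2^*(w)\rangle,
\]
and then transfer the operators $A_1^*$,~$A_2^*$ onto the first argument, which is legitimate modulo $\im\hd$: $\langle\psi_i,A_1^*(w)\rangle\equiv\langle A_1(\psi_i),w\rangle$ and $\langle\psi_{i+1},A_2^*(w)\rangle\equiv\langle A_2(\psi_{i+1}),w\rangle$. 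The hypothesis $A_1(\psi_i)=A_2(\psi_{i+1})$ then kills the two $w$-terms, leaving $\langle\tilde{\psi}_i,\tilde{F}\rangle\equiv\langle\psi_i,F\rangle=\hd\omega_i$ modulo~$\im\hd$. Hence $\langle\tilde{\psi}_i,\tilde{F}\rangle=\hd\tilde{\omega}_i$ for a horizontal form~$\tilde{\omega}_i$ differing from~$\omega_i$ by the integration-by-parts terms, so $\tilde{\omega}_i$ is a conservation law of~$\tilde{\mathcal{E}}$ whose generating function $d_1[\tilde{\omega}_i]$ is exactly $\tilde{\psi}_i$.

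For~(ii) I would substitute the explicit matrices~\eqref{eq:19} and compute both sides on~$J^\infty$. Abbreviating $L=\ell^*_{F+A_1^*(w)+A_2^*(w)}$, one gets $\tilde{A}_1(\psi_i,-\psi_{i+1})=(A_1(\psi_i)+A_1(\psi_{i+1}),-L(\psi_{i+1}))$ and $\tilde{A}_2(\psi_{i+1},-\psi_{i+2})=(A_2(\psi_{i+1})+A_2(\psi_{i+2}),-L(\psi_{i+1}))$. The second components coincide identically, while the first coincide by the two instances $A_1(\psi_i)=A_2(\psi_{i+1})$ and $A_1(\psi_{i+1})=A_2(\psi_{i+2})$ of the original Magri relation. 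As these identities hold on all of~$J^\infty$, they hold on~$\tilde{\mathcal{E}}$ with no correction by trivial symmetries.

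The computation is mostly bookkeeping; the genuine content, and the step I expect to require the most care, is the claim in~(i) that the generating function of the constructed $\tilde{\omega}_i$ is precisely $(\psi_i,-\psi_{i+1})$ --- that is, that the $\hd$-exact terms generated by integration by parts do not perturb it. As an independent check I would verify the cosymmetry condition $\ell^*_{\tilde{\mathcal{E}}}(\tilde{\psi}_i)=0$ on~$\tilde{\mathcal{E}}$ directly from~\eqref{eq:10}: its lower component is $A_1(\psi_i)-A_2(\psi_{i+1})=0$, and its upper component, after rewriting $\ell^*_{A_1^*(w)}(\psi_i)=\ell^*_{A_1,\psi_i}(w)$ and $\ell^*_{A_2^*(w)}(\psi_{i+1})=\ell^*_{A_2,\psi_{i+1}}(w)$ via~\eqref{eq:7} and using the Magri relation together with $A_1^*(w)=-F$, $A_2^*(w)=0$ on~$\tilde{\mathcal{E}}$, collapses to $\ell_F^*(\psi_i)+\ell_{\psi_i}^*(F)$, which vanishes because applying the Euler operator to the identity $\hd\omega_i=\langle\psi_i,F\rangle$ forces this expression to be zero on~$J^\infty$.
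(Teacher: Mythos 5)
Your proposal is correct and takes essentially the same approach as the paper: your computation in (i) is precisely the paper's chain of equalities for $\hd\omega_i$ (integration by parts on $A_1^*$, the Magri relation, integration by parts on $A_2$) read in the reverse direction, and your matrix computation in (ii) is exactly the direct verification that the paper leaves to the reader. Your extra cosymmetry check $\ell^*_{\tilde{\mathcal{E}}}(\psi_i,-\psi_{i+1})=0$ via the Euler-operator identity $\ell_F^*(\psi_i)+\ell_{\psi_i}^*(F)=0$ is sound but not required.
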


\begin{proof}
  On~$\tilde{J}^\infty$ we have
  \begin{multline*}
    \hd\omega_i=\langle\psi_i,F\rangle
    =\langle\psi_i,F+A_1^*(w)\rangle-\langle\psi_i,A_1^*(w)\rangle \\
    =\langle\psi_i,F+A_1^*(w)\rangle-\langle A_1(\psi_i),w\rangle+\hd\chi_1 \\
    =\langle\psi_i,F+A_1^*(w)\rangle-\langle
    A_2(\psi_{i+1}),w\rangle+\hd\chi_1 \\
    =\langle\psi_i,F+A_1^*(w)\rangle-\langle
    \psi_{i+1},A_2^*(w)\rangle+\hd\chi_2.
  \end{multline*}
  Thus, the form $\omega_i-\chi_2$ is a conservation law with the generating
  function $(\psi_i,-\psi_{i+1})$.  The condition
  $\tilde{A}_1(\psi_i,-\psi_{i+1})=\tilde{A}_2(\psi_{i+1},-\psi_{i+2})$ can be
  easily checked by direct computation.\qed
\end{proof}

\begin{acknowledgements}
We wish to thank Sergey Igonin for reading the manuscript and useful comments.
\end{acknowledgements}

\end{document}